\newtheorem{definition}{Definition}
\newtheorem{lemma}{Lemma}
\newtheorem{proof}{Proof}
\journal{Arxiv}
\begin{document}

\begin{frontmatter}



\title{Skyblocking for Entity Resolution}


\author{Jingyu~Shao,~
	Qing~Wang,~
	and~Yu~Lin
	}

\address{Research School of Computer Science, College of Engineering and Computer Science, The Australian National University, Canberra, ACT, 2600, Australia.}

\begin{abstract}
In this paper, for the first time, we introduce the concept of skyblocking, which aims to efficiently identify the "most preferred" blocking scheme in terms of a given set of selection criteria for entity resolution blocking. To capture all possible preferred blocking schemes, scheme skyline (i.e. blocking schemes on the skyline) has been studied in a multi-dimensional scheme space with dimensions corresponding to selection criteria for blocking (e.g. PC and PQ). However, applying traditional skyline techniques to learn scheme skylines is a non-trivial task. Due to the unique characteristics of blocking schemes, we face several challenges, such as: how to find a balanced number of match and non-match labels to effectively approximate a block scheme in a scheme space, and how to design efficient skyline algorithms to explore a scheme space for finding scheme skylines. To overcome these challenges, we propose a scheme skyline learning approach, which incorporates skyline techniques into an active learning process of scheme skylines. We have conducted experiments over four real-world datasets. The experimental results show that our approach is able to efficiently identify scheme skylines in a large scheme space only using a limited number of labels. Our approach also outperforms the state-of-the-art approaches for learning blocking schemes in several aspects, including: label efficiency, blocking quality and learning efficiency.

\end{abstract}

\begin{keyword}
Entity resolution \sep blocking scheme \sep active learning \sep skyline algorithm.



\end{keyword}

\end{frontmatter}


\section{Introduction}

Entity Resolution (ER) is of great importance in many applications, such as matching
product records from different online stores for customers, detecting
people's financial status for national security, or analyzing health conditions from different medical organizations \cite{christen2012data,fisher2015clustering}. 
Generally, entity resolution refers to the process of identifying records which represent the same real-world entity from one or more datasets \cite{wang2016clustering}.
Common ER approaches require the calculation of similarity between records in order to classify record pairs as matches or non-matches.
However, with the size of datasets growing, the similarity computation time for records increases quadratically. For example, given a dataset $D$, the total number of record pairs to be compared is $ \frac{|D|*(|D|-1)}{2} $ (i.e., each record is paired with all other records in $D$).
To reduce the number of compared record pairs, blocking techniques are widely applied, which can group potentially matched records into the same block \cite{wang2016semantic}. Since the comparison only occurs between records in the same block, blocking can reduce the number of compared record pairs to no more than $ \frac{m*(m-1)}{2} * |B| $, where $ m $ is the number of records in the largest block and $ |B| $ is the number of blocks.

In past years, a good number of techniques have been proposed for blocking \cite{draisbach2009comparison,wang2016semantic,fisher2015clustering,papadakis2014meta,papadakis2014supervised,shao2018active}. 
        Among these techniques, using blocking schemes is an efficient and declarative way to generate blocks. Intuitively, a blocking scheme takes records from a dataset as input, and groups the records using a logical combination of blocking predicates, where each blocking predicate specifies an attribute and its corresponding function. 
Learning a blocking scheme is thus the process of deciding which attributes, what their corresponding methods to compare values in attributes, and how different attributes and methods are logically combined so that desired blocks can be generated to satisfy the needs of an application.
Compared with blocking techniques that consider data at the instance level \cite{draisbach2009comparison,fisher2015clustering}, blocking schemes have several advantages: (1) They only require to decide what metadata, such as attributes and the corresponding methods, is needed, rather than data from individual records; (2) They provide a more human readable description for how attributes and methods are involved in blocking; and (3) They enable more natural and effective interaction for blocking across heterogeneous datasets. 

However, ER applications often require multi-criteria analysis of the varying preferences of blocking schemes. More specifically, given a scheme space that contains a large set of possible blocking schemes and a number of criteria for choosing blocking schemes, such as the commonly used ones: pair completeness (PC), pair quality (PQ) and reduction ratio (RR) \cite{christen2012survey}, how can we identify the \emph{most preferred} blocking scheme? 
Ideally, a good blocking scheme should yield blocks that minimize the number of record pairs to compare, while still preserving true matches as many as possible, i.e. optimizing all criteria
simultaneously. Unfortunately, the criteria for selecting blocking schemes are often competing with each other, for example, PC and PQ are negatively correlated in many applications, as well as RR and PQ \cite{christen2012survey}. In Fig.~\ref{fig:intro}, 
we can see that, a blocking scheme with an improved PC leads to a decrease in PQ, and conversely, a blocking scheme with an increase in PQ may have a decease in PC. Nonetheless, depending on specific applications, different blocking schemes can be considered as being "preferred" in different situations. For example, for a crime investigation where individuals need to be matched with a large databases of people, a high PC would be desired to make sure that potential criminals are included for investigation. On the other hand, in public health studies where each match corresponds to a patient with certain medical conditions, one may require a high PQ and want to be sure to only include patients that do have the medical condition under study \cite{christen2012data}.


	\begin{figure}
		
		\begin{minipage}[h]{0.21\textwidth} 
			\begin{tabular}{|c|c|c|}
				\hline
                Blocking &\multirow{2}{*}{PC} & \multirow{2}{*}{PQ}\\
				scheme & & \\
				\hline\hline
				$s_1$ & 0.13 & 0.76\\
				\hline
				$s_2$ & 0.31 & 0.99\\
				\hline
				$s_3$ & 0.58 & 0.76 \\
				\hline
				$s_4$ & 0.84 & 0.40 \\
				\hline
				$s_5$ & 0.86 & 0.50 \\
				\hline
				$...$ & $...$ & $...$ \\
				\hline
			\end{tabular}
		\end{minipage}
		\begin{minipage}[h]{0.2\textwidth} 
			\centering
			\includegraphics[height = 0.19 \textheight]{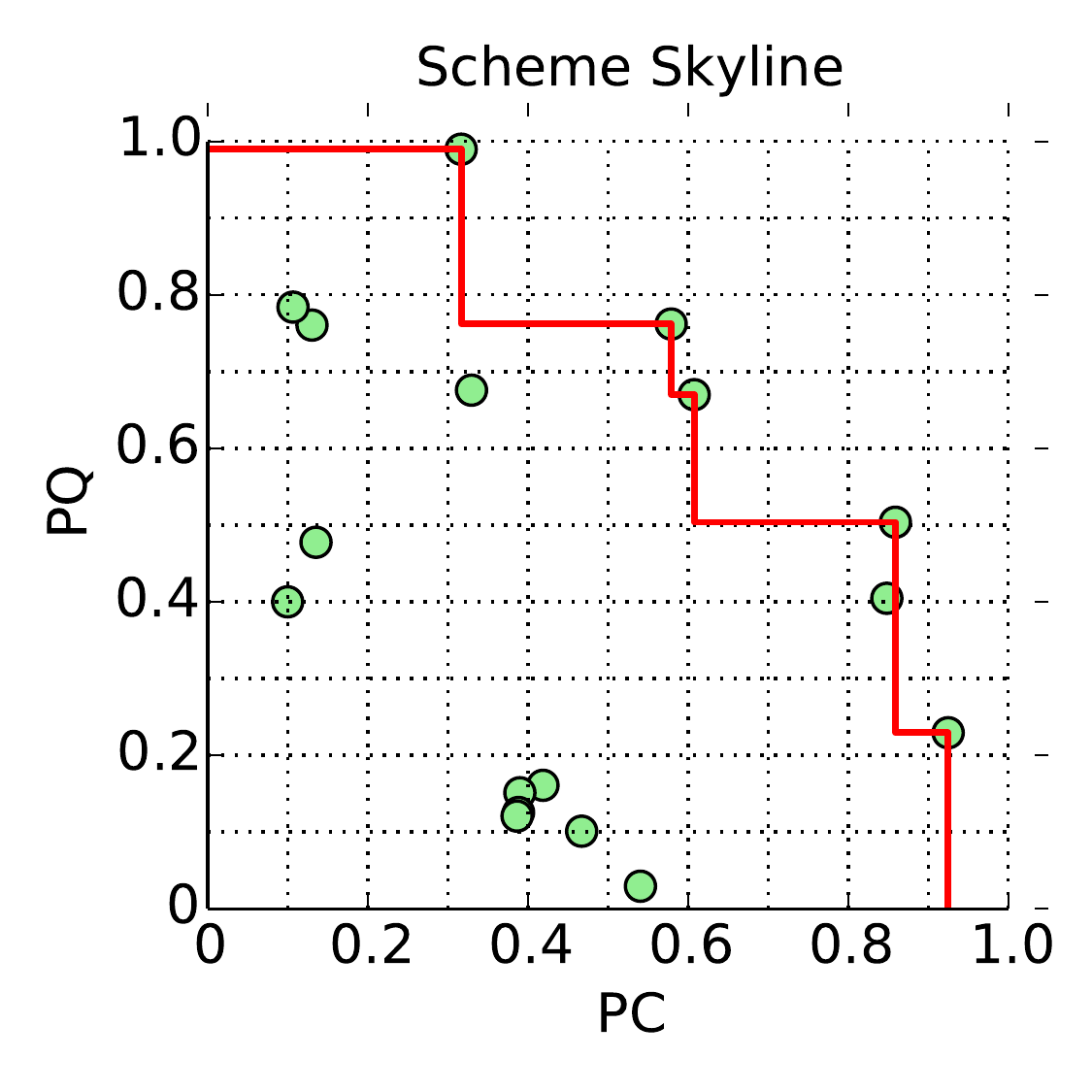}
		\end{minipage}
	\vspace{-2mm}
		\caption{An example for a scheme skyline (schemes in red line) from Cora Dataset, where blocking schemes (green points) are presented in a 2-dimensional space and their corresponding PC and PQ values are shown in the table.}
		\label{fig:intro}
		\vspace{-2mm}
	\end{figure}

Thus, to effectively learn a blocking scheme that is optimal in one or more criteria, previous work have specified conditions on the other negatively correlated criteria in the learning process \cite{kejriwal2015dnf}. For example, some approaches \cite{cao2011leveraging,michelson2006learning} aimed to learn a blocking scheme that can maximize both reduction ratio of record pairs with and without blocking, and coverage of matched pairs. Some others \cite{bilenko2006adaptive, kejriwal2013unsupervised} targeted to find a blocking scheme that can generate blocks containing all or nearly all matched record pairs with a minimal number of non-matched record pairs. Shao and Wang \cite{shao2018active} considered to select a blocking scheme with a minimal number of non-matched pairs constrained under a given coverage rate for matched pairs. However, setting such conditions perfectly is a challenging task, because conditions vary in different applications and it is \emph{unknown} which condition is appropriate for a specific application. If a condition is set too high, no blocking scheme can be learned; on the other hand, if a condition is set too low, the learned blocking scheme is not necessarily the most preferred one. Thus, the question that naturally arises is: can we efficiently identify all possible most preferred blocking schemes in terms of a given set of selection criteria for blocking?


To answer this question, we propose a scheme skyline learning approach for blocking, which incorporates skyline techniques into the learning process of blocking schemes on the skyline, i.e., \emph{scheme skyline}. The aim of learning scheme skyline is to find a range of blocking schemes under different user preferences, and each of such blocking schemes is the optimal with respect to one or more selection criteria. As shown in Fig.~\ref{fig:intro}, possible blocking schemes are marked as green points, and \emph{preferred} blocking schemes correspond to the points on the skyline (i.e., red line). This can thus provide a user with a direct view on all possible preferred blocking schemes for a given dataset and help make decisions on choosing the most preferred blocking scheme from a user's perspective.

\medskip
\noindent\textbf{Challenges } 
To the best of our knowledge, no scheme skyline learning approach has been previously proposed for entity resolution. Traditionally, skyline techniques have been widely used for database queries \cite{kalyvas2017survey, chomicki2013skyline,sarma2011representative}. The result of a skyline query consists of multi-dimensional points for which there is no other point having better values in all the dimensions \cite{chester2015explanations}. However, applying skyline queries technique to learning scheme skyline is a non-trivial task. Different from skyline queries in the database context \cite{chomicki2013skyline}, in which the dimensions of a skyline correspond to attributes and points correspond to records, scheme skylines in our work have the dimensions corresponding to selection criteria for blocking, and points corresponding to blocking schemes in this multi-dimensional space, called \emph{scheme space}. Learning scheme skylines becomes challenging, due to the following unique characteristics of blocking schemes. Firstly, it is hard to obtain match and non-match labels in real-world ER applications. Furthermore, it is well-known that match and non-match labels for entity resolution are often highly imbalanced, called \emph{the class imbalance problem} \cite{fisher2016active,christen2015efficient,wang2015efficient}. For example, given a dataset with two tables, each table containing 1,000 non-duplicate records, the total number of record pairs will be 1,000,000, but the number of true matches is no more than 1,000. The class imbalance ratio of this dataset is thus at most 1:1,000, which indicates that the probability of randomly selecting a matched pair is 0.001. However, to find scheme skylines effectively, we need a balanced training set with both match and non-match labels. Hence, the first challenge we must address is how to find a balanced number of match and non-match labels to effectively approximate a block scheme in a scheme space. Secondly, a scheme space can be very large, making an exhaustive search for blocking schemes on the skyline computationally expensive. As will be discussed in Section \ref{subsec-complexity}, if a blocking scheme is composed of at most $n$ different blocking predicates, the number of all possible blocking schemes can be $O (2^{n \choose [n/2]} )$ asymptotically. Nevertheless, only a small portion of these blocking schemes are on the skyline. Thus, the second challenge we need to act on is how to 
design efficient skyline algorithms to explore a large scheme space efficiently for finding blocking schemes on the skyline. Enumerating through the entire scheme space by checking a blocking scheme each time is obviously not an efficient solution in practice and infeasible for large datasets with hundreds or thousands of attributes.

\smallskip
\noindent\textbf{Contributions } 
To overcome the above challenges, in our approach, we hierarchically learn a scheme skyline based on blocking predicates to avoid enumerating all possible blocking schemes. A balanced active sampling strategy is also developed, which aims to select informative samples (i.e., a more balanced set of match and non-match samples) within a limited budget of labels. In summary, the contributions of this paper are as follows.

\begin{itemize}
	
	\item We formalize the scheme skyline learning problem and propose three novel algorithms for learning a scheme skyline. These enable us to efficiently learn the scheme skyline with significantly less label cost and help users to select the blocking scheme they need in terms of different constraints.
	
	\item To efficiently use limited labels, we tackle the class imbalance problem by converting it into the balance sampling problem. Then we propose the active sampling strategy to solve the balance sampling problem by actively selecting representative samples. 
    \item Three scheme skyline learning algorithms are proposed based on the active sampling and scheme extension strategies. The scheme extension strategy aims to reduce the search space and label cost by only considering potentially dominating schemes.
	
	\item We have evaluated our approach over four real-world datasets. Our experimental results show that our approach outperforms the state-of-the-art approaches in several aspects, including: label efficiency, blocking quality and learning efficiency. 

\end{itemize}

\medskip
\noindent\textbf{Outline } The rest of the paper is structured as follows. Section \ref{sec_pd} introduces the notations used in the paper and the problem definition. Section \ref{sec_cil} discusses our active sampling stretegy to the class imbalance problem. After that, three skyline algorithms for learning scheme skylines are presented in Section \ref{sec_ssla}, and the hierarchical scheme skyline learning
strategy has been discussed. Section \ref{sec_eva} discusses our experimental results, and Section \ref{sec_rw} introduces the related work. We conclude the paper in Section \ref{sec_con}.

\section{Problem Definition}
\label{sec_pd}

Let $D$ be a dataset consisting of a set of records, and each record $r_i \in D$ be associated with a set of attributes $A = \{a_1, a_2, ...,\\ a_{|A|} \}$. We use $r_i.a_k$ to refer to the value of attribute $a_k$ in a record $r_i$. Each attribute $a_k \in A$ has a domain $Dom(a_k)$. 
A \emph{blocking function} $h_{a_k} : Dom (a_k) \rightarrow U$ takes an attribute value $r_i.a_k$ from $Dom (a_k)$ as input and returns a value in $U$ as output.
A \emph{blocking predicate} $\langle a_k, h_{a_k} \rangle $ is a pair of attribute $a_k$ and blocking function $h_{a_k}$. 
Given a pair of records $\langle r_i, r_j \rangle$, a blocking predicate $\langle a_k, h_{a_k} \rangle$ returns \emph{true} if $h_{a_k}(r_i.a_k) = h_{a_k}(r_j.a_k)$ holds; otherwise \emph{false}.
For example, we may have \emph{soundex} as a blocking function for attribute \emph{author}, and accordingly, a blocking predicate $\langle \textit{author, soundex} \rangle$. For two records with values ``Gale'' and ``Gaile'' of attribute \emph{author}, $\langle \textit{author, soundex} \rangle$ returns \emph{true} because of $soundex(Gale) = soundex(Gaile) = G4$.

A \emph{(blocking) scheme} $s$ is a disjunction of conjunctions of blocking predicates (i.e. in the disjunctive normal form). For example, we may have $(\langle \textit{author, soundex} \rangle \wedge \langle \textit{title, exactmatch} \rangle)\vee (\langle \textit{author, soundex} \rangle \wedge \langle \textit{venue, exactmatch} \rangle)$ as a blocking scheme.
A blocking scheme is called a \emph{n-ary blocking scheme} if it contains $n$ distinct blocking predicates.
For example, a blocking scheme $(p_1 \wedge p_2) \vee (p_1 \wedge p_3)$ is a 3-ary blocking scheme, because it contains three blocking predicates, i.e. $p_1$, $p_2$ and $p_3$.

A \emph{training set} $T = (X, Y)$ consists of a set of feature vectors $X = \{x_1, x_2, ...,$ $x_{|T|}\}$ and a set of labels $Y = \{y_1, y_2, ..., y_{|T|}\}$, where each $y_i\in \{ M, N\}$ is the label (i.e. match or non-match) of the feature vector $x_i$ $(i=1, \dots, |T|)$. 
Given a pair of records $\langle r_i, r_j \rangle$, and a set of blocking predicates $P$, a \emph{feature vector} of $r_i$ and $r_j$ is a tuple $\langle v_1, v_2, ... v_{|P|} \rangle$, where each $v_k \ (k=1,\dots,|P|) $ is an equality value of either 1 or 0, describing whether the corresponding blocking predicate $\langle a_k, h_{a_k} \rangle \in P$ returns true or false.
For each feature vector $x_i$, a \emph{human oracle} $\zeta$ is used to provide a label $y_i\in Y$. 
If $y_i=M$, it indicates that the given pair of records refers to the same entity (i.e. \emph{a match}), and analogously, $y_i=N$ indicates that the given pair of records refers to two different entities (i.e. \emph{a non-match}). 
The human oracle $\zeta$ is associated with a budget limit \emph{budget$(\zeta) \geq 0 $}, which indicates the total number of labels the human oracle can provide.

Given a blocking scheme $s=s_1\vee s_2...\vee s_n$, where each $s_i$ ($i=1, \dots, n$) is a conjunction of blocking predicates, we can generate a set of pairwise disjoint blocks $B_s =\{b_1, \dots, b_{|B_s|}\}$, where $b_k\subseteq D$ ($k=1,\dots, |B_s|$), $\bigcup_{1\leq k\leq |B_s|} b_k = D$ and $\bigwedge_{1\leq i\neq j\leq |B_s|} b_i\cap b_j=\emptyset$. 
For any two records $r_i$ and $r_j$ in the dataset $D$, $r_i$ and $r_j$ are placed into the same block iff there exists a conjunction of block predicates $s_i$ such that, for each blocking predicate $\langle a_k, h_{a_k} \rangle$ in $s_i$, we have that $h_{a_k}(r_i.a_k) = h_{a_k}(r_j.a_k)$ holds.

Now we characterize blocking schemes under different perspectives using the notion of scheme skyline. 
Given a dataset $D$ and a set of blocking schemes $S$, a \emph{scheme space} over $D$ is a $d$-dimensional space consisting of points in $[0,1]^d$. Each blocking scheme $s\in S$ can be mapped to a point $p(s)$ in this scheme space such that $p(s) = (p_1.s, p\dots, p_d.s)$, where each $p_i.s$ $i\in [1,d]$ indicates the value of $s$ in the i-th dimension of this scheme space.

\begin{definition}
(Dominance) Given two blocking schemes $s_1$ and $s_2$, we say $s_1$ \emph{dominates} $s_2$, denoted as $s_1 \succ s_2$, iff $ \forall i \in d, p_i.s_1 \geq p_i.s_2$ and $\exists j \in d, p_j.s_1 > p_j.s_2$. 
\end{definition}

Based on the notion of dominance between two blocking schemes, we now define the notion of scheme skyline.

\begin{definition}
(Scheme skyline) Given a dataset $D$ and a set of blocking schemes $S$, a \emph{scheme skyline} $S^*$ over $S$ is a subset of $S$, where each scheme $s \in S^*$ is not dominated by any other blocking scheme in $S$, i.e. $S^* = \{s \in S: \nexists s' \in (S - S^*), s' \succ s \}$. 
\end{definition}

Without loss of generality, we will discuss a scheme space with $d=2$ in this paper, which has two dimensions: one indicates the PC of a blocking scheme and the other indicates the PQ. Both PC and PQ are commonly used measures for blocking \cite{christen2012survey}. For a pair of records that are placed into the same block, we call it a \emph{true positive}
if it refers to a match; otherwise, it is a \emph{false positive}. Similarly, a pair of records is called a \emph{false negative} if it refers to a match but the records are placed into two different blocks. For convenience, we use $tp(s)$, $fp(s)$ and $fn(s)$ to denote the numbers of true positives, false positives and false negatives in the blocks generated by $s$, respectively. The \emph{pair completeness} (PC) of a blocking scheme $s$ is the number of true positives $tp(s)$ divided by the total number of true matches, i.e. $tp(s)+fn(s)$, which measures the rate of true matches remained in blocks. We have:
  \begin{equation}
    PC(s) = \frac{tp(s)}{tp(s) + fn(s)}
  \end{equation}
The \emph{pair quality} (PQ) of a blocking scheme $s$ is the number of true positives $tp(s)$ divided by the total number of record pairs that are placed into the same blocks, i.e. $tp(s)+fp(s)$, which measures the rate of true positives in blocks. We have:
  \begin{equation}
    PQ(s) = \frac{tp(s)}{tp(s) + fp(s)}
  \end{equation}
Note that, it is possible and straightforward to extend the dimensions of a scheme space by taking into account other measures for blocking schemes, such as reduction ratio (RR) and execution time (ET) \cite{kopcke2010frameworks}.

We now define the problem of scheme skyline learning in a scheme space, which allows us to select desired blocking schemes depending upon different blocking criteria.

\begin{definition}
(Scheme skyline learning problem) Let $\zeta$ be a human oracle, $D$ be a dataset, $S$ be a set of blocking scheme over $D$, and $\Sigma$ be a d-dimensional scheme space. Then the \emph{scheme skyline learning problem} is to learn the scheme skyline $S^*$ over $S$ through actively selecting a training set $T$ from $D$, such that $|T| \leq budget(\zeta)$ holds.
\end{definition}

\section{Class Imbalance Learning}
\label{sec_cil}

\label{sec_as}
\begin{figure}
	\begin{center}
		\includegraphics[height=0.17\textheight]{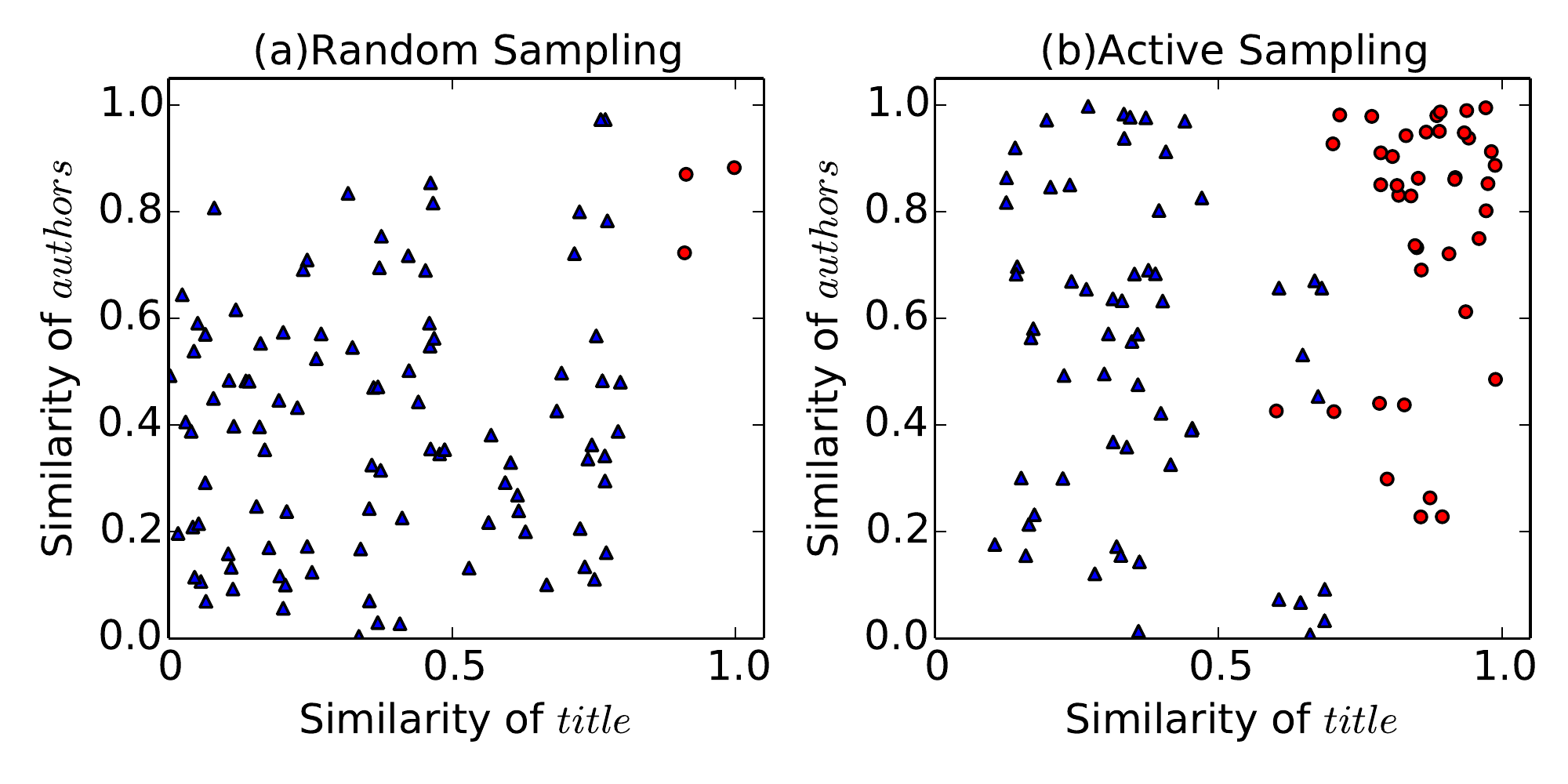}
	\end{center}
\vspace{-5mm}
	\caption{A comparison on the sample distribution of 100 samples from Cora dataset: (a) random sampling, and (b) active sampling (based on similarity of two attributes \emph{title} and \emph{authors}, where a red circle indicates a match sample and a blue triangle indicates a non-match sample.}
	\label{fig:3d}
	\vspace{-2mm}	
\end{figure}

In this section, we propose an active sampling strategy to tackle the class imbalance problem. This strategy aims to generate an informative training set with a small amount of label usage.

The class imbalance problem has been known to impede the performance of learning approaches in entity resolution \cite{bilenko2006adaptive,cao2011leveraging,kejriwal2013unsupervised}. Although the ratio of non-matches and matches may vary in different datasets, using random sampling can always lead to much more non-matches than matches in a training set, as shown in Fig.~\ref{fig:3d}.(a). Thus, supervised learning approaches that are based on random sampling need a large number of labels in order to gurantee the learning performance.

Previously, it has been reported \cite{arasu2010active,bellare2012active} that the more similar two records are, the higher probability they can be a match. To tackle the class imbalance problem, we observe that, by virtue of this correlation between similarity and labels, a balanced set of similar and dissimilar records likely represents a training set with balanced matches and non-matches.
Hence, we define the notion of balance rate.

\begin{definition}
	(Balance rate:) Let $s$ be a blocking scheme and $X$ a non-empty feature vector set, the \emph{balance rate} of $X$ in terms of $s$, denoted as $\gamma (s, X)$, is defined as:
	\begin{equation}
	\frac{|\{ x \in X | s(x) = true\}| - |\{ x \in X | s(x) = false\}|}{|X|}
	\end{equation}
\end{definition}

Conceptually, the balance rate describes how balance or imbalance of $X$ by comparing the number of similar samples to that of dissimilar samples in terms of a given blocking scheme $s$. The range of balance rate is $[-1, 1]$. If $\gamma (s, X) = 1$, there are all similar samples in $X$ with regard to $s$, whereas $\gamma (s, X) = -1$ means all the samples in $X$ are dissimilar pairs. In these two cases, X is highly imbalanced. If $\gamma (s, X) = 0$, there is an equal number of similar and dissimilar samples, indicating that X is balanced.

Then, based on the notion of balance rate, we convert the class imbalance problem into the balanced sampling problem, which is defined as follows:

\begin{definition}
	(Balanced sampling problem) Given a set of blocking scheme $S$ and a label budget $n \leq budget(\zeta)$, the \emph{balanced sampling problem} is to select a training set $T = (X, Y)$, where $|X| = n$, in order to:
	
	\begin{equation}
	\textbf{minimize }\sum_{s_i \in S} {\gamma(s_i, X)}^2
	\end{equation}
\end{definition}

For two different blocking schemes $s_1$ and $s_2$, they may have different balance rates over the same feature vector set $X$, i.e. $\gamma (s_1, X) \neq \gamma (s_2, X)$ is possible. The goal here is to find a training set that minimizes the balance rates in terms of the given set of blocking schemes. The optimal case is $\gamma(s_i, X)=0$, $\forall s_i\in S$ according to the objective function above, but sometimes it is impossible to achieve this in real world applications.
For example, a dataset may contain all the records from the same county, which would indicate all samples are similar samples for the blocking scheme $\langle \textit{county, exact-match} \rangle$.

In Fig.~\ref{fig:3d}, we compare samples identified using our active sampling strategy with the ones found using random sampling. Due to the class imbalance problem, the samples from random sampling are almost all non-matches, whereas the samples from active sampling contain much more matches than the samples from random sampling.

\section{Scheme Skyline Learning Approaches}\label{sec_ssla}

Now we propose three algorithms for scheme skyline learning. Our algorithms are designed upon the scheme extension strategy, which begins with blocking schemes containing only one blocking predicate, then extends with other blocking schemes in either conjunction or disjunction form.


\subsection{Scheme Extension Strategy}
In general, given a blocking scheme $s$, there are two kinds of extensions, through which we can extend $s$ to another blocking scheme: conjunction and disjunction. Hence if a conjunction or disjunction can be decided before extension, the searching space will be reduced by half.

Let $s_1$ and $s_2$ be two blocking schemes. We have the monotonicity property of PC, in terms of either the disjunction of $s_1$ and $s_2$ or the conjunction of $s_1$ and $s_2$, based on the following lemma.



\begin{lemma}
	\label{lem:pc}
	\begin{equation}
	PC(s_i) \leq PC(s_1 \vee s_2),\ where\ i = 1, 2
	\end{equation}	
    \begin{equation}
	PC(s_i) \geq PC(s_1 \wedge s_2),\ where\ i = 1, 2
	\end{equation}	
\end{lemma}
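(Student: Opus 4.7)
The plan is to reduce both inequalities to elementary set-theoretic facts about the set of true matches that a scheme captures within its blocks. The key observation is that the denominator $tp(s) + fn(s)$ of $PC(s)$ is simply the total number of true matching pairs in the dataset $D$, which does not depend on the scheme $s$ at all. So $PC$ is a strictly increasing function of $tp$, and it suffices to prove $tp(s_i) \leq tp(s_1 \vee s_2)$ and $tp(s_i) \geq tp(s_1 \wedge s_2)$ for $i = 1, 2$.

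To do this, I would first introduce, for any scheme $s$, the set
\begin{equation}
M(s) = \{ \langle r_i, r_j \rangle : r_i, r_j \text{ are a true match and lie in the same block of } B_s \},
\end{equation}
so that $tp(s) = |M(s)|$. Next I would unfold the paper's definition of how a DNF scheme groups records into blocks, to justify the two structural identities
\begin{equation}
M(s_1 \vee s_2) = M(s_1) \cup M(s_2), \qquad M(s_1 \wedge s_2) = M(s_1) \cap M(s_2).
\end{equation}
The first holds because $s_1 \vee s_2$ places $r_i, r_j$ in a common block iff some conjunction from $s_1$ or $s_2$ returns true on the pair, and the second holds because after distributing $s_1 \wedge s_2$ back into DNF, a pair is co-blocked iff some conjunction of $s_1$ and some conjunction of $s_2$ both return true on it, i.e.\ iff both $s_1$ and $s_2$ co-block the pair.

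Once these two identities are in hand, the inequalities drop out immediately from $|M(s_i)| \leq |M(s_1) \cup M(s_2)|$ and $|M(s_i)| \geq |M(s_1) \cap M(s_2)|$, followed by division by the constant $tp(s) + fn(s)$. I do not expect a serious obstacle; the only subtle point is being precise that ``two records satisfy a DNF scheme'' really does decompose as a union over the conjunctive clauses in the way the paper's block-generation rule dictates, since that is what turns the logical $\vee$/$\wedge$ on schemes into set-theoretic $\cup$/$\cap$ on captured matches. Everything else is bookkeeping.
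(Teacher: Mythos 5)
Your proposal is correct and follows essentially the same route as the paper's own proof: both reduce the claim to the monotonicity of $tp$ under disjunction/conjunction together with the observation that the denominator $tp(s)+fn(s)$ is the scheme-independent total number of true matches. Your explicit identities $M(s_1 \vee s_2) = M(s_1) \cup M(s_2)$ and $M(s_1 \wedge s_2) = M(s_1) \cap M(s_2)$ simply make precise the inclusion the paper states informally as $t \in B_{s_1} \cup B_{s_2} = B_{s_1 \vee s_2}$, so the argument is the same in substance, just slightly more carefully bookkept.
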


\begin{proof}	
	Given $i = 1$, for any true positive record pair $t \in B_{s_1}$, we have $t \in B_{s_1} \cup B_{s_2} =  B_{(s_1 \vee s_2)}$. This is to say, the number of true positives generated by scheme either $s_1$ or $s_2$ cannot be larger than that generated by scheme $s_1 \vee s_2$ and cannot be smaller than that generated by scheme $s_1 \wedge s_2$, i.e. $|tp(B_{s_1})| \leq |tp(B_{s_1 \vee s_2})|$ and $|tp(B_{s_1})| \geq |tp(B_{s_1 \wedge s_2})|$. Since we know the sum of true positives and false negatives is constant, which refers to the total number of matched pairs in the dataset, based on the definition of PC, we prove the lemmas.    

\end{proof}

The lemma states that, if a scheme generates blocks which discard more matches than expected, a disjunction form of extension can be applied in order to increase matches contained in blocks. On the contrary, if a conjunction form of extension is applied, the completeness of the matches will be reduced.

\subsection{Grid-based Naive Learning}
\label{sec_gn}
\begin{figure}
	\begin{center}
		\includegraphics[height = 0.23 \textheight]{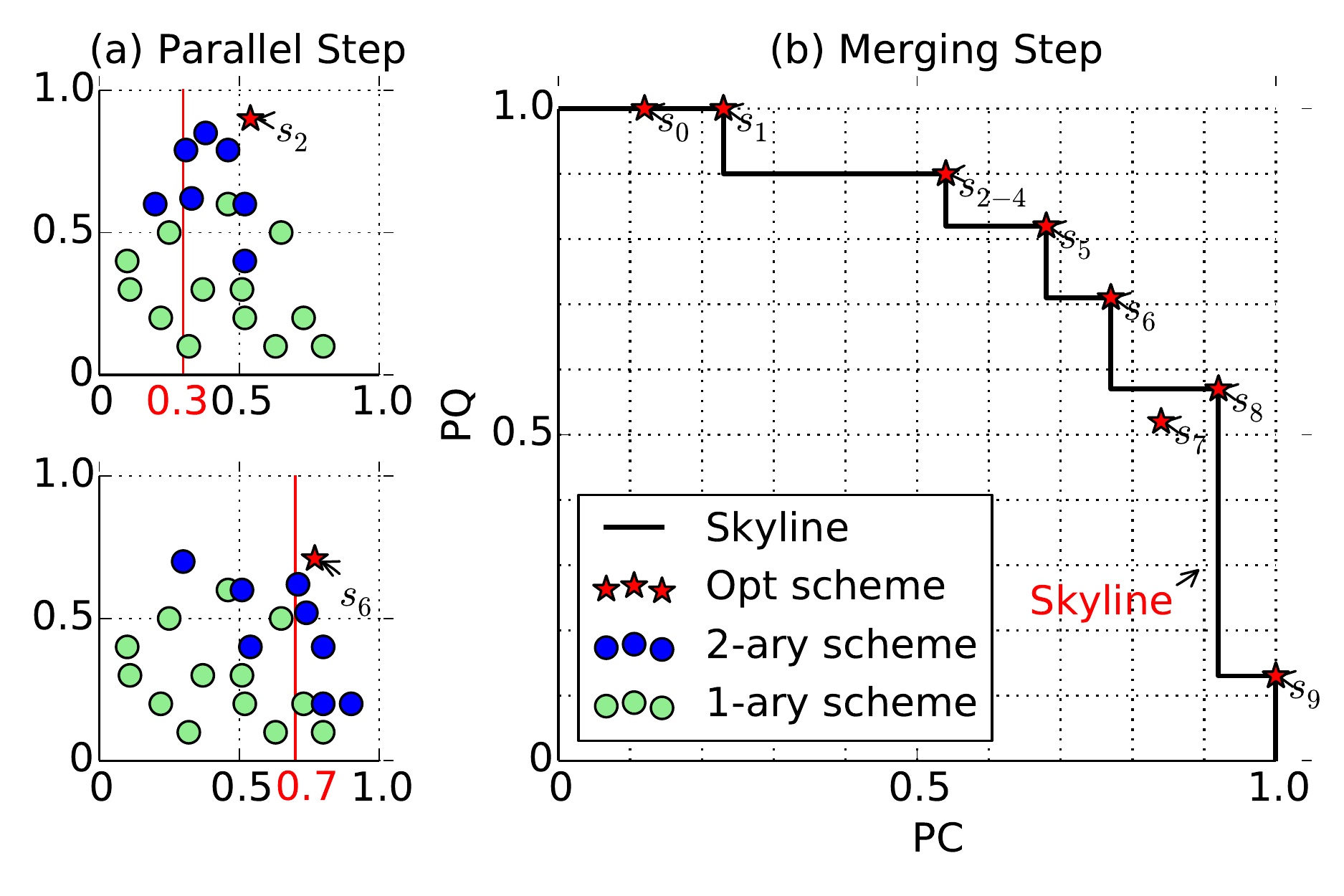}
	\end{center}
\vspace{-5mm}
	\caption{An illustration of the grid-based naive learning algorithm, where optimal blocking schemes are learned in parallel as shown in (a), and the scheme skyline is depicted in (b), where $s_7$ is dominated by the skyline, and $s_{2}-s_{4}$ are the same scheme shown as $s_{2-4}$.}
	\label{fig:naivegrid}	
\end{figure}

A naive way to learn skyline blocking schemes is to learn the optimal blocking schemes under different thresholds in one or more dimensions. Then, based on these optimal schemes, a scheme skyline can be learned.

In the following, we first present an \emph{Active Scheme Learning (ASL)} algorithm which can actively learn the optimal blocking scheme under a given PC threshold and a specified label budget. In this algorithm, the concept of \emph{optimal blocking scheme} is defined as follows.

\begin{definition}
\label{def:opt}
	(Optimal blocking scheme) Given a human oracle $\zeta$, and a PC threshold $\epsilon \in [0, 1]$, the \emph{optimal blocking scheme} is a blocking scheme $s_{\epsilon, \zeta}$ w.r.t. the following objective function, through selecting a training set $T$:
	\label{prob_1}
	\begin{equation}
	\label{eq:opt}
	\begin{aligned}
	& \underset{}{\text{\textbf{maximize}}}
	& & PQ  \\
	& \text{\textbf{subject to}}		
	& & PC \geq \epsilon,\ \text{and }\ |T| \leq budget(\zeta) \\		
	&&& 	
	\end{aligned}
	\end{equation}
    Additionally, a \emph{locally optimal blocking scheme} over $S$ is the optimal scheme among a set $S$ of schemes which satisfies the above criteria.
\end{definition}

At its core, the ASL algorithm adopts two active learning strategies: \emph{active sampling} aims to tackle the class imbalance problem by solving the balanced sampling problem and thus to reduce the label cost, \emph{active branching} is a scheme extension strategy that is targeted to reduce the searching space and thus further reduce the label cost.

\begin{algorithm}
	\KwIn{Dataset: $D$ \\
		\hspace{9mm}	PC threshold $\epsilon \in (0, 1]$\\
		\hspace{9mm}	Human oracle $\zeta$\\
		\hspace{9mm}	Set of blocking predicates $P$\\
		\hspace{9mm}	Sample size $k$\\
	}
	\KwOut{
		A blocking scheme $s$
	}
	$S = S_{prev} = P$, $n = 0$, $T = \emptyset$,	$X = \emptyset$ \\
	$X = X \cup \textsc{Random\_sample}(D)$ \\
	\While{$n < budget (\zeta)$}{		
		\For
        {each $s_i \in S$}{	
			\If{$\gamma(s_i, X) \leq 0$}{
				$X = X \cup \textsc{Similar\_sample}(D, s_i, k)$   \\
			}
			\Else{	
				$X = X \cup \textsc{Dissimilar\_sample}(D, s_i, k)$ \\
			}	
			$n = |X|$ 
		}
		$T = T \cup \{(x_i, \zeta(x_i)) | x_i \in X\}$	\\
		$s = \textsc{Find\_optimal\_scheme} (S, T, \epsilon)$; $S_{prev} = S$ \\
		\If{$\textsc{Found}(s)$}{ 			
			$S = \{s \wedge s_i| s_i \in S_{prev} \}$
		}
		\Else{
			$s = \textsc{Find\_approximate\_scheme} (S, T, \epsilon)$	\\ 
			$S = \{s \vee s_i| s_i \in S_{prev} \}$	
		}
	}	
	\textbf{Return} $s$
	\caption{Active Scheme Learning (ASL)}
	\label{Algo:A1}
	
\end{algorithm}

A high-level description for the algorithm is presented in Algorithm~\ref{Algo:A1}. Let $S$ be a set of blocking schemes, where each blocking scheme $s_i \in S$ is a blocking predicate at the beginning. The budget usage is initially set to zero, i.e. $n = 0$.
A set of feature vectors is selected from the dataset as seed samples (lines 1 and 2). Then, the algorithm iterates until the number of samples in the training set reaches the budget limit (line 3). At the beginning of each iteration, the active sampling strategy is applied to generate a training set (lines 4 to 10). For each blocking scheme $s_i \in S$, the samples are selected in two steps: (1) firstly, the balance rate of this blocking scheme $s_i$ is calculated (lines 5 and 7), (2) secondly, a feature vector to reduce this balance rate is selected from the dataset (lines 6 and 8). Then the samples are labeled by the human oracle and stored in the training set $T$. The usage of label budget is increased, accordingly (lines 9 and 10).

A locally optimal blocking scheme $s$ is searched among a set of blocking schemes $S$ over the training set, according to a specified PC threshold $\epsilon$ (line 11). Then the active branching strategy is applied to determine whether a conjunction or disjunction form of this optimal blocking scheme and other blocking schemes will be used based on Lemma~\ref{lem:pc} (lines 12-16). If the locally optimal blocking scheme is found, new blocking schemes are generated by extending $s$ to a conjunction with each of the blocking schemes in $S_{prev}$, so that the PQ of new blocking schemes may be further increased (lines 12 and 13). Otherwise a blocking scheme with the maximum PC value is selected and new schemes are generated using disjunctions, so that the PC of new schemes can be further increased (lines 14 to 16).

Now, we propose the \emph{Naive Learning} algorithm for scheme skylines, which is based on the idea of using ASL for learning optimal blocking schemes under different PC thresholds. The naive learning algorithm is described in Algorithm~\ref{Algo:GN}. The input a user need to specify is the total label budget, the maximal ary of schemes in scheme skyline, and the size $\Delta$ of threshold interval, i.e. the difference of two consecutive thresholds. For example, if $\Delta$ is set to be 0.1, there will be in total 10 thresholds used, i.e. 0.1, 0.2, ..., 1.0, to learn at most 10 optimal blocking schemes. It is possible that the same blocking scheme is learned under two different thresholds. As shown in Algorithm~\ref{Algo:GN}, this approach consists of two steps. In the first step, called \emph{parallel step}, the threshold interval $\Delta$ is defined (line 1), and the optimal blocking schemes are learned in parallel under each threshold using the ASL algorithm (lines 2-5). Here, in our algorithm, sample size $k$ is uniformly decided in terms of the total label budget $budget(\zeta)$, the threshold interval $\Delta$ and the number of predicates $|P|$. In the second step, called \emph{merging step}, all optimal blocking schemes are merged and their domination is checked, which generates the scheme skyline (lines 6-7). Fig.~\ref{fig:naivegrid}(a) illustrates that two optimal blocking schemes are learned in parallel, where their thresholds are set to be 0.3 and 0.7, respectively. In Fig.~\ref{fig:naivegrid}(b), all the optimal schemes that are learned under the thresholds 0.1, 0.2, \dots, 1.0 are merged together and the scheme skyline is generated.

\begin{algorithm}
	\KwIn{Dataset: $D$ \\
		\hspace{9mm}	Human oracle $\zeta$\\
		\hspace{9mm}	Set of blocking predicates $P$\\
		\hspace{9mm}	Size of threshold interval $\Delta$\\
        \hspace{9mm}	Sample size $k$\\
	}
	\KwOut{
		Scheme Skyline $S^*$
	}
	$\epsilon = \Delta$, $S^* = \emptyset$, $S = \emptyset$\\
    \While{$\epsilon \leq 1$}{	
      $s = \textsc{ASL} (D, \epsilon, \zeta, P, k)$\\
      $S = S \cup \{s\}$\\
      $\epsilon += \Delta$\\
    }   
    $S^* = \textsc{Skyline\_schemes} (S)$\\
	
	\textbf{Return} $S^*$
	\caption{Grid-based Naive Learning (Naive-Sky)}
	\label{Algo:GN}
	
\end{algorithm}

\subsection{Grid-based Active Learning}
\label{sec_ga}


There are some limitations in the grid-based naive learning approach. For example, how to choose an appropriate size of threshold interval? If $\Delta$ is set too high,  blocking scheme points in the skyline will be sparse. On the contrary, if $\Delta$ is set too low, users may have dense blocking scheme points in the skyline, but a large parallel iteration step will be involved in learning optimal blocking schemes under different thresholds, and some of them are often redundant. This leads to unnecessary computational costs.  
\begin{figure} 
	\begin{center}
		\includegraphics[height = 0.11 \textheight]{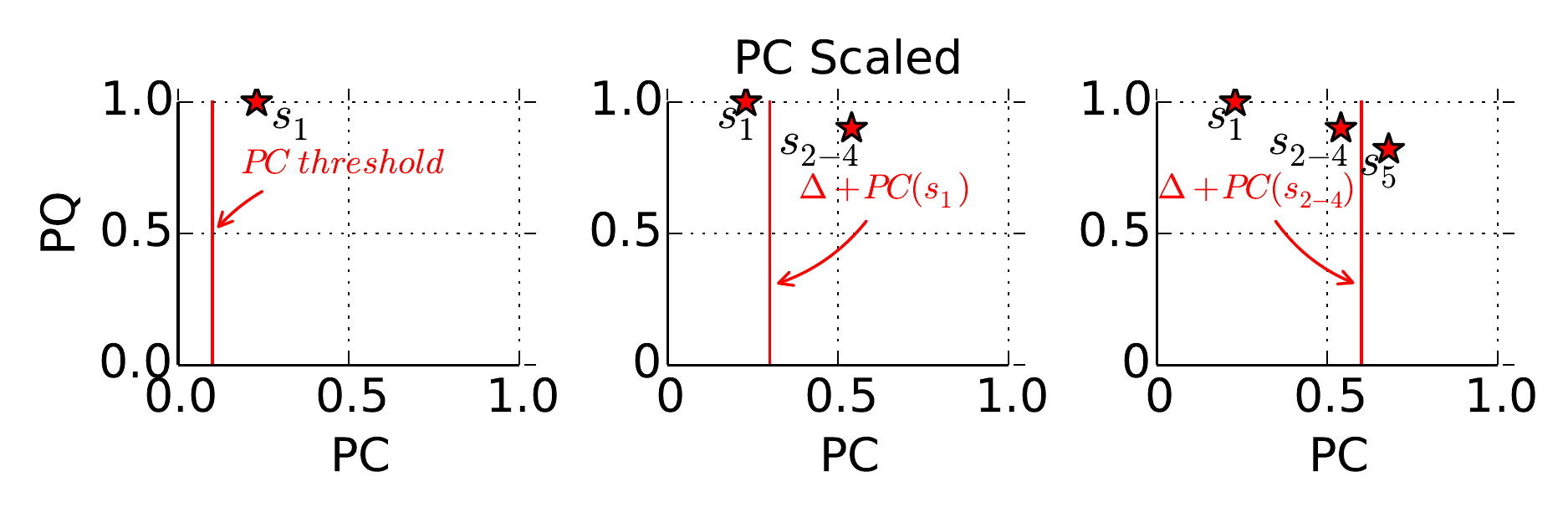}
	\end{center}
\vspace{-5mm}
	\caption{Grid-based active learning algorithm which can actively choose PC threshold. We use the same example as presented in Fig.~\ref{fig:naivegrid}.}
	\label{fig:ga}	
\end{figure}

During our study with the grid-based naive learning algorithm, we have noticed that the PC threshold specified by a user may not be (or even far from) the actual PC value of the optimal blocking scheme learned by the grid-based naive learning algorithm. For example, when the threshold is set to 0.3, the actual PC value of the learned optimal blocking scheme can be 0.53. Moreover, when the threshold is set to 0.4, e.g. $\Delta = 0.1$, the learned optimal blocking scheme still remains unchanged. To leverage this observation, we propose another algorithm for learning scheme skylines, called \emph{Grid-based Active Learning}, to efficiently learn the scheme skyline even for a small $\Delta$ based on the following lemma. The process of this algorithm is shown in Fig.~\ref{fig:ga}.

\begin{lemma}\label{lem:cr}
  
  Assume $s_{\epsilon}$ and $s_{\epsilon'}$ are two optimal schemes learned by ASL under the PC thresholds $\epsilon$, $\epsilon'$ and the same label budget. The following property holds:
  \begin{equation}
    \forall \epsilon' \in [\epsilon, PC(s_{\epsilon})], \ s_{\epsilon'} = s_{\epsilon}.
  \end{equation}
	
  
\end{lemma}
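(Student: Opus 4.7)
The plan is to prove the lemma directly from the optimization structure encoded in Definition~\ref{def:opt}. For a fixed label budget, let $F(\tau) = \{s : PC(s) \geq \tau,\ |T| \leq budget(\zeta)\}$ denote the feasible set of blocking schemes under threshold $\tau$. By the definition of the optimal scheme, $s_{\epsilon}$ is a maximizer of $PQ$ over $F(\epsilon)$, and $s_{\epsilon'}$ is a maximizer of $PQ$ over $F(\epsilon')$.

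First I would observe the monotonicity of the feasible sets: since $\epsilon' \geq \epsilon$, the constraint $PC(s) \geq \epsilon'$ is at least as strong as $PC(s) \geq \epsilon$, hence $F(\epsilon') \subseteq F(\epsilon)$. This immediately yields the upper bound
\begin{equation}
PQ(s_{\epsilon'}) \;=\; \max_{s \in F(\epsilon')} PQ(s) \;\leq\; \max_{s \in F(\epsilon)} PQ(s) \;=\; PQ(s_{\epsilon}).
\end{equation}
Next I would use the other half of the hypothesis, $\epsilon' \leq PC(s_{\epsilon})$, to establish the reverse direction: it implies $PC(s_{\epsilon}) \geq \epsilon'$, so $s_{\epsilon} \in F(\epsilon')$. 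Since $s_{\epsilon}$ is feasible for the $\epsilon'$-problem, we get $PQ(s_{\epsilon'}) \geq PQ(s_{\epsilon})$, and combined with the previous inequality, $PQ(s_{\epsilon'}) = PQ(s_{\epsilon})$. Thus $s_{\epsilon}$ is itself an optimum of the $\epsilon'$-problem.

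The last step is to conclude $s_{\epsilon'} = s_{\epsilon}$ as schemes (not merely as values of the objective). The main obstacle is precisely this uniqueness issue: the argument above only shows that $s_{\epsilon}$ attains the optimum for the tighter threshold. I would handle this by appealing to the deterministic tie-breaking of the \textsc{Find\_optimal\_scheme} subroutine inside ASL (both runs use the same label budget and therefore, by the algorithm's construction, sample the same training set and extend the same candidate schemes), so the routine returns the same scheme for both $\epsilon$ and $\epsilon'$. If the paper does not want to rely on an algorithmic tie-break, the lemma can equivalently be read as ``$s_{\epsilon}$ is also an optimal scheme for threshold $\epsilon'$,'' which is exactly what the two inequalities above already establish, and this is what the grid-based active learning algorithm exploits to skip redundant thresholds in $[\epsilon, PC(s_{\epsilon})]$.
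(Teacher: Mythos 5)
Your proof is correct and takes essentially the same route as the paper: both arguments rest on the nesting of the feasible sets under the two thresholds together with the observation that $\epsilon' \leq PC(s_{\epsilon})$ keeps $s_{\epsilon}$ feasible for the tighter threshold, so it remains a maximizer of PQ there. The paper closes the final step by simply asserting from Definition~\ref{def:opt} that the optimal scheme is unique, whereas you explicitly flag that uniqueness of the argmax is the only real gap and resolve it via the algorithm's deterministic tie-breaking (or the weaker reading of the lemma), which is if anything more careful than the published argument.
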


\begin{proof}
	Given a threshold $\epsilon$, a set of blocking schemes $S$ ($\forall s \in S$, $PC(s) > \epsilon$), we have $s_\epsilon \in S$. Given any threshold $\epsilon'$, s.t. $\epsilon < \epsilon' \leq PC(s_\epsilon)$, we have a set of blocking schemes $S'$ ($\forall s' \in S'$, $PC(s') > \epsilon'$ ). We can also have $s_\epsilon \in S'$ and $s_{\epsilon'} \in S'$. Based on Definition~\ref{def:opt}, the optimal blocking scheme is unique. we have $s_\epsilon = s_{\epsilon'}$.
\end{proof}

The grid-based learning algorithm is presented in Algorithm~\ref{Algo:GA}. By Lemma~\ref{lem:cr}, the algorithm is designed to actively select the next PC threshold based on the PC value of current optimal blocking scheme (line 5).
 
\begin{algorithm}
	\KwIn{Dataset: $D$ \\
		\hspace{9mm}	Human oracle $\zeta$\\
		\hspace{9mm}	Set of blocking predicates $P$\\
		\hspace{9mm}	Size of threshold interval $\Delta$\\
        \hspace{9mm}	Sample size $k$\\
	}
	\KwOut{
		Skyline blocking schemes $S^*$
	}
	$\epsilon = \Delta$, $S^* = \emptyset$, $S = \emptyset$ \\
    \While{$\epsilon \leq 1$}{	      
      $s = \textsc{ASL} (D, \epsilon, \zeta, P, k)$\\
      $S = S \cup \{s\}$\\
      $\epsilon = \textsc{PC}(s) + \Delta$\\
    }    
    $S^* = \textsc{Skyline\_schemes} (S)$\\
	\textbf{Return} $S^*$
	\caption{Grid-based Active Learning (Active-Sky)}
	\label{Algo:GA}
	
\end{algorithm}



\subsection{Progressive Skyline Learning}


\begin{figure*} 
	\begin{center}
		\includegraphics[height = 0.19 \textheight]{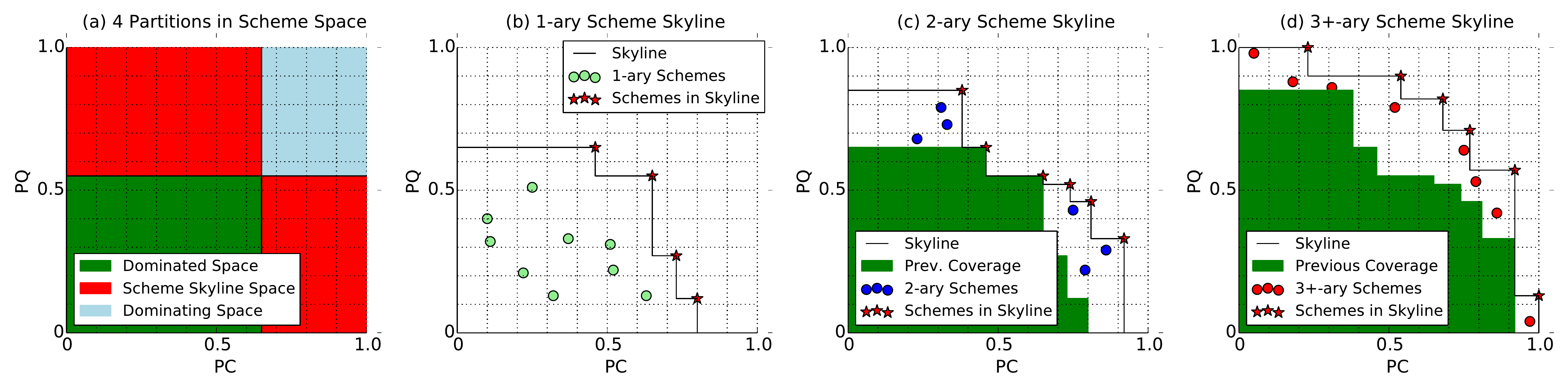}
	\end{center}
\vspace{-5mm}
	\caption{Progressive learning algorithm in progress: (a) shows the space separated by PC and PQ that we may find scheme skyline (blue and red area) or dominated by current skyline point (green area); (b) - (d) present the progress of our algorithm. }
	\label{fig:progressive}	
\end{figure*}

The grid-based active learning approach has solved the problem that the lower the $\Delta$ is set, the more iterations (each iteration learns one optimal blocking scheme) the ASL algorithm has to take. However, it still has to construct the training set for each iteration, and distribute the label budget into different iterations for sampling. This gives rise to a new question: can we reduce the iteration times to further reduce the label cost?

Here we propose a \emph{Progressive Skyline Learning} approach. This approach can learn a scheme skyline progressively, i.e., it starts by learning the 1-ary scheme skyline and ends by learning the n-ary scheme skyline, as illustrated in Fig~\ref{fig:progressive} In this approach, a user does not need to set the size of threshold interval $\Delta$. 

Given a blocking scheme, we can partition the scheme space into four parts, as shown in Fig.~\ref{fig:progressive}(a). Blocking schemes in the green area, called \emph{Dominated Space}, are dominated by the given blocking scheme. Blocking schemes in the skyline can be found in the red area, called \emph{Scheme Skyline Space}. Blocking schemes in the blue area, called \emph{Dominating Space}, dominate the given blocking scheme. Clearly, if a blocking scheme is in the skyline, its dominating space should have no any other schemes. 
Therefore, our objective for progressive skyline learning is: given a scheme skyline $S^*$, we try to extend each $s \in S^*$ by discarding schemes in its dominated space, verifying schemes in its scheme skyline space, and finding schemes in its dominating space.

The algorithm is described in Algorithm~\ref{Algo:PL}, which does not change much compared with the previous algorithms in the sampling steps (lines 1-11). However, in the scheme extension step, we take the property illustrated in Fig.~\ref{fig:progressive}(a) into consideration. For each scheme in the skyline, we extend it with all the other predicates that are not included in this scheme (lines 12-13). We extend the existing schemes in $S^*$ by conducting both conjunctions and disjunctions with all predicates, obtain both PC and PQ of the new schemes in terms of the training set and select ones with incremental PC or PQ (line 14). That is to say, if the new scheme fails in the red area as shown in Fig.~\ref{fig:progressive}(a), we add this scheme into the candidate set. If it appears in the blue area, we replace the previous one by this new scheme. If it appears in the green area, we discard it.
\begin{algorithm}[h]
	\KwIn{Dataset: $D$ \\
		\hspace{9mm}	Human oracle $\zeta$\\
		\hspace{9mm}	Set of blocking predicates $P$\\
		\hspace{9mm}	Ary of Schemes $l$\\
	}
	\KwOut{
		scheme skyline $S^*$
	}
	$S = P$, $n = 0$, $T = \emptyset$,	$X = \emptyset$, $k = \frac{budget(\zeta)}{2l \times |P|} $\\
	$X = X \cup \textsc{Random\_sample}(D)$ \\
	\While{$n < budget (\zeta)$}{		
		\For(\tcp*[f]{Begin sampling}){each $s_i \in S$}{	
			\If{$\gamma(s_i, X) \leq 0$}{
				$X = X \cup \textsc{Similar\_sample}(D, s_i, k)$   \\
			}
			\Else{	
				$X = X \cup \textsc{Dissimilar\_sample}(D, s_i, k)$ \\
			}	
			$n = |X|$ \tcp*[f]{End sampling}
		}
		$T = T \cup \{(x_i, \zeta(x_i)) | x_i \in X\}$	\tcp*[f]{Add samples}\\
		$S^* = \textsc{Scheme\_skyline} (S)$; $S = \emptyset$ \\
        \For{$s_j \in S^*$}{
          \For{$p_i \in P$}{
          $\textsc{Select\_Schemes}(p_i, s_j, S)$  
          }
        }
	}	
	\textbf{Return} $S^*$
	\caption{Progressive Learning (Pro-Sky)}
	\label{Algo:PL}
	
\end{algorithm}




\begin{table*}
\caption{Characteristics of datasets}
	\centering
	\label{tab:dataset}
	\begin{tabular}{cccccc}
		\toprule
		Datasets  &$\#$ Attributes&$\#$ Records& $\#$ True Matches &Class Imbalance Ratio & \# Blocking Predicates\\
		\midrule
		Cora & 4 & 1,295 & 17,184 &1 : 49 & 16\\
		DBLP-Scholar & 4/4 &  2,616 / 64,263 & 2360 & 1 : 71,233 & 16/16\\
		DBLP-ACM & 4/4 &  2,616 / 2,294 & 2,224 & 1 : 2,698 & 16/16\\		
		NCVR & 18/18 &  6,233,785 / 6,981,877 & 6,122,579 & 1 : 6.6 $\times$ $10^6$ & 72/72\\
		\bottomrule
	\end{tabular}
\end{table*}

\subsection{Complexity Analysis}\label{subsec-complexity}

In this section, we discuss the search complexity and the time complexity for learning the scheme skyline.

\subsubsection{Search complexity}
Given $n$ blocking predicates, we have $2^n$ blocking schemes composed of blocking predicates in conjunctions. Furthermore, if a blocking scheme is composed of at most $n$ different blocking predicates (i.e. n-ary blocking scheme) in disjunction of conjunctions, we can regard them as the \emph{monotonic boolean functions}, which are defined to be the expression combining the inputs (which may appear more than once) using only the operators conjunction and disjunction (in particular "not" is forbidden) \cite{kisielewicz1988solution}.
Hence the searching complexity for all possible blocking schemes can be $O (2^{n \choose [n/2]} )$ asymptotically, which is also known as the \emph{Dedekind Number}. Learning a scheme skyline in this way will be no doubt to be accurate, because all blocking schemes will be considered. However, this is space consuming and label wasting when the number of blocking predicates is large, but the number of blocking schemes in a scheme skyline is small.

To analyze the search complexity of schemes of the algorithms \emph{Grid-based Naive Learning} and \emph{Grid-based Active Learning}, we first analyze the complexity of ASL. Given $n$ blocking predicates, with sufficient label budget, in the worst case, we can learn a n-ary blocking scheme as output. During this process, we first need to search for all $n$ blocking schemes. Then based on the locally optimal one, we need to search for $n-1$ blocking schemes of 2-ary, and then, $n-2$ blocking schemes of 3-ary. Accordingly, the searching complexity of this process is $O(n^2)$. Given $\Delta$, the total complexity of \emph{Grid-based Naive Learning} is $O(\frac{n^2}{\Delta})$, and in the worst case, the the total complexity of \emph{Grid-based Naive Learning} is the same as $O(\frac{n^2}{\Delta})$. The complexity of algorithm \emph{Progressive Learning}, if we use \emph{$|Opt_i|$} to describe the number of schemes selected in the i-ary, the total complexity will be $\sum_{i = 1}^{n} |Opt_i| \times 2|P|$, where 2 indicates both conjunction and disjunction of two schemes.In the worst case that all schemes are in the skyline, is the same as \emph{Dedekind Number}. However, if $\Delta$ is introduced to this algorithm, i.e., there should be at least a distance of $\Delta$ between two schemes in skyline, the total complexity will be $\frac{n^2}{\Delta}$ ($\frac{1}{\Delta} \leq n$), in the worst case.

\subsubsection{Time complexity}

We further discuss the time complexity of sampling. To tackle the class imbalance problem, we select both similar and dissimilar samples w.r.t. a given blocking scheme $s$. However, as explained in Section~\ref{sec_as}, it may be a high imbalance ratio of similar and dissimilar samples or even impossible to select one w.r.t. a blocking predicate. In the worst case, we have to traverse the whole dataset to obtain one sample. Hence the time complexity to generate $k$ samples will be $O(|D| \times k)$, where $D$ is a dataset and $k$ is the sample size for one predicate.

To make the algorithms efficient under a large sample budget, we have adopted index tables for each candidate scheme by obtaining their values, hence the algorithm can choose either similar or dissimilar samples it needs in linear time. In this way, the time complexity will be $O(|D| + k)$ for any candidate scheme.

\section{Evaluation}\label{sec_eva}


We have evaluated our algorithms to experimentally verify their performance. All our experiments have been run on a server with 6-core 64-bit Intel Xeon 2.4 GHz CPUs, 128GBytes of memory.

\subsection{Experimental Setup}

We present the datasets, blocking predicates, baseline approaches and measures used in our experiments.


\subsubsection{Datasets} We have used four datasets in our experiments: (1) \emph{Cora}\footnote[1]{Available from: \emph{http://secondstring.sourceforge.net}} dataset contains bibliographic records of machine learning publications. 
(2) \emph{DBLP-Scholar}\footnotemark[1] dataset contains bibliographic records from the DBLP and Google Scholar websites.
(3) \emph{DBLP-ACM}\cite{kopcke2010evaluation} dataset contains bibliographic records from the DBLP and ACM websites.
(4) \emph{North Carolina Voter Registration (NCVR)}\footnote[2]{Available from: \emph{http://alt.ncsbe.gov/data/}} dataset contains real-world voter registration information of people from North Carolina in the USA. Two sets of records collected in October 2011 and December 2011 respectively are used in our experiments.
We summarize the characteristics of these data sets in Table~\ref{tab:dataset}. A complete list of attributes in these data sets are presented in Table \ref{tab:dataset-attributes}. 
\begin{table}

	\caption{Attributes of datasets}
	\centering
	\label{tab:dataset-attributes}
	\begin{tabular}{|c|c|}
		\toprule
		Datasets  & Attributes\\
		\midrule
		\multirow{3}{*}{Cora} & authors, pub\_details, title, affiliation,\\ &conf\_journal, location, publisher, year,\\ &pages, editors, appear, month  \\\hline
		DBLP-Scholar & \multirow{2}{*}{title, authors, venue and year} \\\cline{1-1}
		DBLP-ACM & \\\hline		
		\multirow{6}{*}{NCVR} & county\_id, county\_desc, voter\_reg\_num,  \\
        & voter\_status\_desc,  voter\_status\_reason\_desc,\\ &  absent\_ind, last\_name, first\_name, midl\_name, \\&full\_name\_rep, full\_name\_mail, reason\_cd,\\ & status\_cd, house\_num, street\_name, \\ &street\_type\_cd, res\_city\_desc and state\_cd\\
		\bottomrule
	\end{tabular}
\end{table}

\subsubsection{Blocking predicates}
We have used the following blocking functions in our experiments:

\begin{itemize}
\item \emph{Exact-Match:} This blocking function takes two strings as input and compares the string values. The function returns true if the strings are exactly the same.

\item \emph{Soundex:} This blocking function takes two strings as input and transfers them into soundex codes based on a reference table \cite{holmes2002improving}. Then it returns true if two codes are the same.

\item \emph{Double-Metaphone:} This blocking function also takes two strings as input and transfers each string into two codes from two reference tables \cite{philips2000double}. It returns true if either code is the same. Comparing with \emph{Soundex}, it has a better performance when it is applied to Asian names. 

\item \emph{Get-substring:} This blocking function takes only a segment (i.e. first four letters) of the whole string for comparison, and returns true if the segments from two strings are the same.

\end{itemize}
The above blocking functions have been applied to all attributes in the datasets depicted in Table \ref{tab:dataset-attributes}, which accordingly leads to 16 or 72  blocking predicates in each dataset as shown in Table \ref{tab:dataset}.

\subsubsection{Baseline approaches} 
We have used the following approaches as the baselines: (1) \emph{Fisher} \cite{kejriwal2013unsupervised}, which is the state-of-the-art unsupervised scheme learning approach proposed by Kejriwal and Miranker. Details of this approach will be outlined in Section~\ref{sec_rw}.  (2) \emph{TBlo} \cite{fellegi1969theory}, which is a traditional blocking approach based on expert-selected attributes. In the survey \cite{christen2012survey}, this approach has a better performance than the other approaches in terms of the F-measure results. (3) \emph{RSL (Random Scheme Learning)}, which is an algorithm that is similar to the structure of the ASL algorithm, but uses random sampling, instead of active sampling, to build a training set and learn blocking schemes. In each experiment, we have run the RSL ten times. We present the average results of the blocking schemes it has learned. 

\subsubsection{Measures} We use the following measures \cite{christen2012survey} to evaluate the blocking quality of our approach. \emph{Reduction Ratio (RR)} is one minus the total number of record pairs in blocks divided by the total number of record pairs without blocks, i.e., RR measures the reduction of the number of compared pairs with and without blocks. \emph{Pairs Completeness (PC)} and \emph{Pairs Quality (PQ)} have been defined in Section~\ref{sec_pd}. \emph{F-measure} $FM = \frac{2*PC*PQ}{PC + PQ}$ is the harmonic mean of PC and PQ. 

The size of a training set directly affects the scheme skyline learning results. If the training set is small, an algorithm may learn different skylines in different runs. This is because the undersampling can lead to biased samples which do not represent the characteristics of the whole sampling space. Hence, we define the notion of \emph{constraint satisfaction} as $CS = \frac{N_s}{N}$ to describe the learning stability of an algorithm.  We use $N_s$ to denote the number of times that an algorithm can learn a blocking scheme, and $N$ is the total number of times the algorithm runs. For example, if an algorithm runs ten times and learns three different scheme skylines with 2, 3, and 5 times, respectively, then the CS value for the third blocking scheme is $\frac{5}{10} = 0.5$ in this case.

\medskip
In the rest of this section, we will present the experimental results of our skyline algorithms, and compare the performance of our approach with the baseline approaches.

\subsection{Label Efficiency}

\begin{figure*} [ht]
	\begin{center}
		\includegraphics[width = \textwidth]{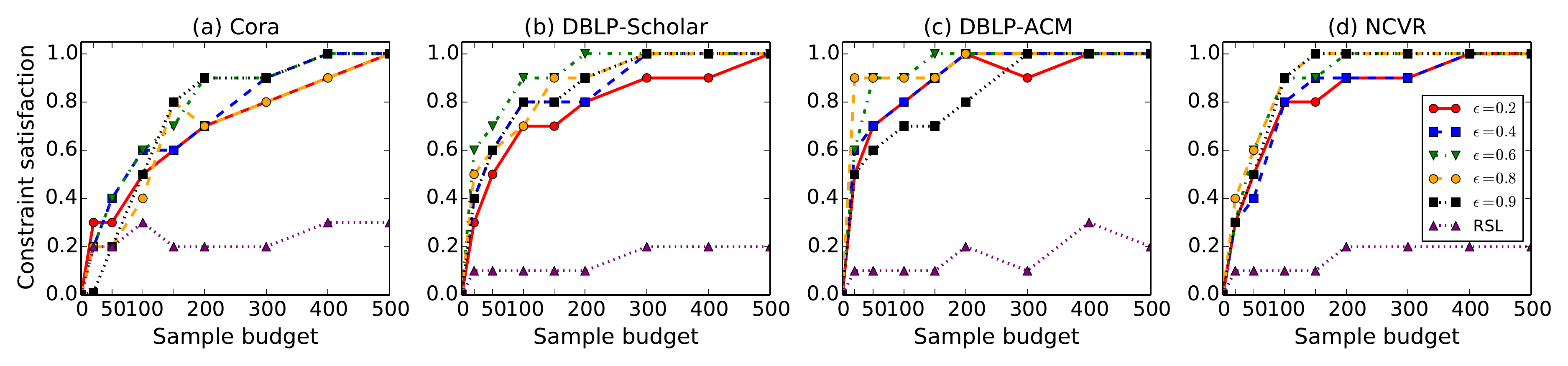}
	\end{center}
    \vspace{-5mm}
	\caption{Comparison on constraint satisfaction by ASL and RSL under different label budgets over four datasets} 
	\label{fig:budget}
\end{figure*}

We have evaluated the label efficiency of our algorithms.

\subsubsection{Label cost}
The label cost of our proposed algorithms, namely \emph{Naive-Sky} for algorithm~\ref{Algo:GN}, \emph{Active-Sky} for algorithm~\ref{Algo:GA} and \emph{Pro-Sky} for algorithm~\ref{Algo:PL}, are shown in Table~\ref{tab:skybudget}, where the label costs of each algorithm for learning the scheme skylines of four datasets with $CS = 90\%$ are recorded. We have defined $\Delta = 0.1$. For \emph{Naive-Sky}, the label budget begins with 50, and increases by 50 for each run of ASL. For \emph{Active-Sky}, the budget is the same. The total label cost is accumulated when the PC threshold increases. For \emph{Pro-Sky}, the label budget begins with 500, and increases by 500. The label cost is recorded when CS reaches 90\% in ten runs. Table~\ref{tab:skybudget} shows that  \emph{Active-Sky} saves more labels than \emph{Naive-Sky}; nonetheless, \emph{Pro-Sky} can reduce the label cost from one third to a half of the label usage of \emph{Active-Sky}.

\begin{table} [h]
	\caption{Comparison on label cost of scheme skyline algorithms \emph{Naive-Sky}, \emph{Active-Sky} and \emph{Pro-Sky} with CS = 90\%}	
	\centering
	\label{tab:skybudget}
	\begin{tabular}{|c|c|c|c|c|}
		\toprule
		Algorithm&Cora&DBLP-Sc.&DBLP-ACM&NCVR\\
		\hline
		Naive-Sky & 6000 & 5000 & 3000 & 3500\\
		\hline
		Active-Sky & 4200 & 3500 & 1250 & 1400\\
		\hline
		Pro-Sky & 2500 & 2000 & 1000 & 1000\\	
		\hline			
	\end{tabular}\\
	
\end{table}

\begin{table} [h]
	\caption{Comparison on label cost of ASL and RSL with CS = 90\%}	
	\centering
	\label{tab:budgetsize}
	\begin{tabular}{|c|c|c|c|c|}
		\toprule
		$\epsilon$ &Cora&DBLP-Scholar&DBLP-ACM&NCVR\\
		\hline
		0.2 & 600 & 500 & 300 & 300\\
		\hline
		0.4 & \textbf{400} & 350 & 200 & 350\\
		\hline
		0.6 & 450 & \textbf{250} & \textbf{150} & 250\\	
		\hline	
		0.8 & 550 & 300 & 200 & \textbf{200}\\
		\hline
		0.9 & 500 & \textbf{250} & 300 & 250\\
		\hline
		RSL & 7,900 & 10,000+ & 2,200 & 10,000+\\
		\hline
	\end{tabular}\\
	
\end{table}

\begin{table*}
	\vspace{-5mm}
	\caption{Comparison on the running time (in seconds) of different algorithms over four datasets}
	\centering
	\label{tab:runtime}
    
    \begin{tabular}{|c|ccc|ccc|ccc|ccc|}
		\toprule
		&\multicolumn{3}{|c|}{Cora}&\multicolumn{3}{|c|}{DBLP-Scholar}&\multicolumn{3}{|c|}{DBLP-ACM} & \multicolumn{3}{|c|}{NCVR} \\
        \cline{2-13}
        &Budget & $\Delta$& RT &Budget & $\Delta$& RT&Budget & $\Delta$& Run Time&Budget & $\Delta$& RT\\
		\hline
		\multirow{2}{*}{Naive-Sky} &\multirow{2}{*}{6000}&0.1& 18.74 &\multirow{2}{*}{5000}&0.1& 78.18 &\multirow{2}{*}{3000}&0.1& 76.95 &\multirow{2}{*}{3500}&0.1& 225.08 \\
        &&0.05&21.55&&0.05&89.9&&0.05&90.8&&0.05&267.85\\
		\hline
        \multirow{2}{*}{Active-Sky} &\multirow{2}{*}{4200}&0.1& 12.48 &\multirow{2}{*}{3500}&0.1& 56.28 &\multirow{2}{*}{1250}&0.1& 33.23 &\multirow{2}{*}{1400}&0.1& 118.4 \\
        &&0.05&14.68&&0.05&60.73&&0.05&34.55&&0.05&122.16\\
		\hline
        \multirow{2}{*}{Pro-Sky} &\multirow{2}{*}{2200}&0.1& 5.6 &\multirow{2}{*}{1600}&0.1& 32.56 &\multirow{2}{*}{800}&0.1& 11.38 &\multirow{2}{*}{800}&0.1& 63.09 \\
        &&0.05&6.27&&0.05&38.09&&0.05&13.28&&0.05&70.56\\
		\hline
	\end{tabular}    
\end{table*}
Both the \emph{Naive-Sky} and \emph{Active-Sky} algorithms use ASL to build scheme skylines. In order to analysis the factors that affect the label cost and to show that our active sampling strategy has better performance than random sampling, we have evaluated the label efficiency of ASL algorithm. Here, We present the label costs for learning a stable blocking scheme with CS = 90\% under different PC thresholds, and compare them with the number of labels required by RSL (random sampling algorithm) in Table~\ref{tab:budgetsize}. The minimum label cost for each dataset is marked in black. In our experiments, the label budget for ASL under a given PC threshold is tested with 50 at the beginning, and then increased by 50. The label budget for RSL is tested with 50 at the beginning, and increased by 50 each time. The experiments for both ASL and RSL terminate when the CS values of the learned blocking schemes reach 90\% in ten consecutive runs.

\subsubsection{Constraint satisfaction evaluation}
\label{sec_cs}
To further analyze the label efficiency of our algorithms, especially the naive-sky and the active-sky algorithms, we have monitored the constraint satisfaction under different label budgets (ranging from 20 to 500) in terms of various PC thresholds (e.g. $\epsilon \in \{0.1, 0.2, 0.4, 0.6, 0.8\}$) over four datasets. The results are presented in Fig.~\ref{fig:budget}.

We use the total label budget as the training label size for RSL to make a fair comparison on active sampling and random sampling. Our experimental results show that random sampling with a limited number of labels fails to identify an optimal blocking scheme. Additionally, both PC threshold and label budget can affect the constraint satisfaction. In general, when the label budget increases or the PC threshold $\epsilon$ decreases, the CS value grows. It is worthy to note that an extremely high PC threshold is usually harder to achieve, and thus sometimes no scheme that satisfies the threshold can be learned due to the limitation of samples (e.g. the black line with square marks under 100 sample budget). However, on the other hand, if the PC threshold is set extremely low under a low label budget (e.g. the red line with $\epsilon = 0.2$), it could generate quite a number of blocking schemes satisfying the threshold with different PQ , which also leads to a low CS value. This is happened because that in our approach, only the optimal scheme is selected as candidate. With a limited number of samples, a sub-optimal blocking scheme may be selected.

\subsubsection{Label efficiency analysis}

Here we analyze the factors that can affect the label cost. First, as explained before, both extremely high and low PC thresholds may cost more labels than other cases. Second, datasets of a smaller size (i.e., a smaller number of record pairs) often need less labels such as DBLP-ACM. Furthermore, datasets with more attributes (e.g. NCVR) have a larger number of blocking predicates and thus need a higher label budget for sampling all its blocking predicates. The quality of a dataset may also affect the label cost. Generally, the cleaner a dataset is, the less labels it costs. For example, even though Cora has the smallest size of attributes and records, it still needs the largest number of labels because it contains fuzzy values (e.g. mis-spelling names, exchanged first name and last name). On the contrary, NCVR needs a lower label budget even it has more attributes and records. The cleanness of a dataset also affects the distribution of label costs under different thresholds. For example, with a PC threshold $\epsilon = 0.2$, we need only 200 labels for NCVR but 550 labels for Cora to generate blocking schemes with CS = 90\%.

\subsection{Time Efficiency} 
We show the Run Time (RT) of our three algorithms over four datasets in Table~\ref{tab:runtime}. The label budget we use is decided by Table~\ref{tab:skybudget}, where the algorithms can learn consistent results as output. Two threshold intervals are used, i.e. $\Delta = 0.05$ and $\Delta = 0.1$. Generally speaking, the run time depends on two factors: (1) the size of the dataset and (2) the number of samples we need. With the incremental of the label budget and the number of records in one dataset, the run time grows. For the algorithms, the threshold interval $\Delta$ may affect the run time, but not significantly.

\begin{figure*}[h]	
	\centering
		\includegraphics[width = \textwidth]{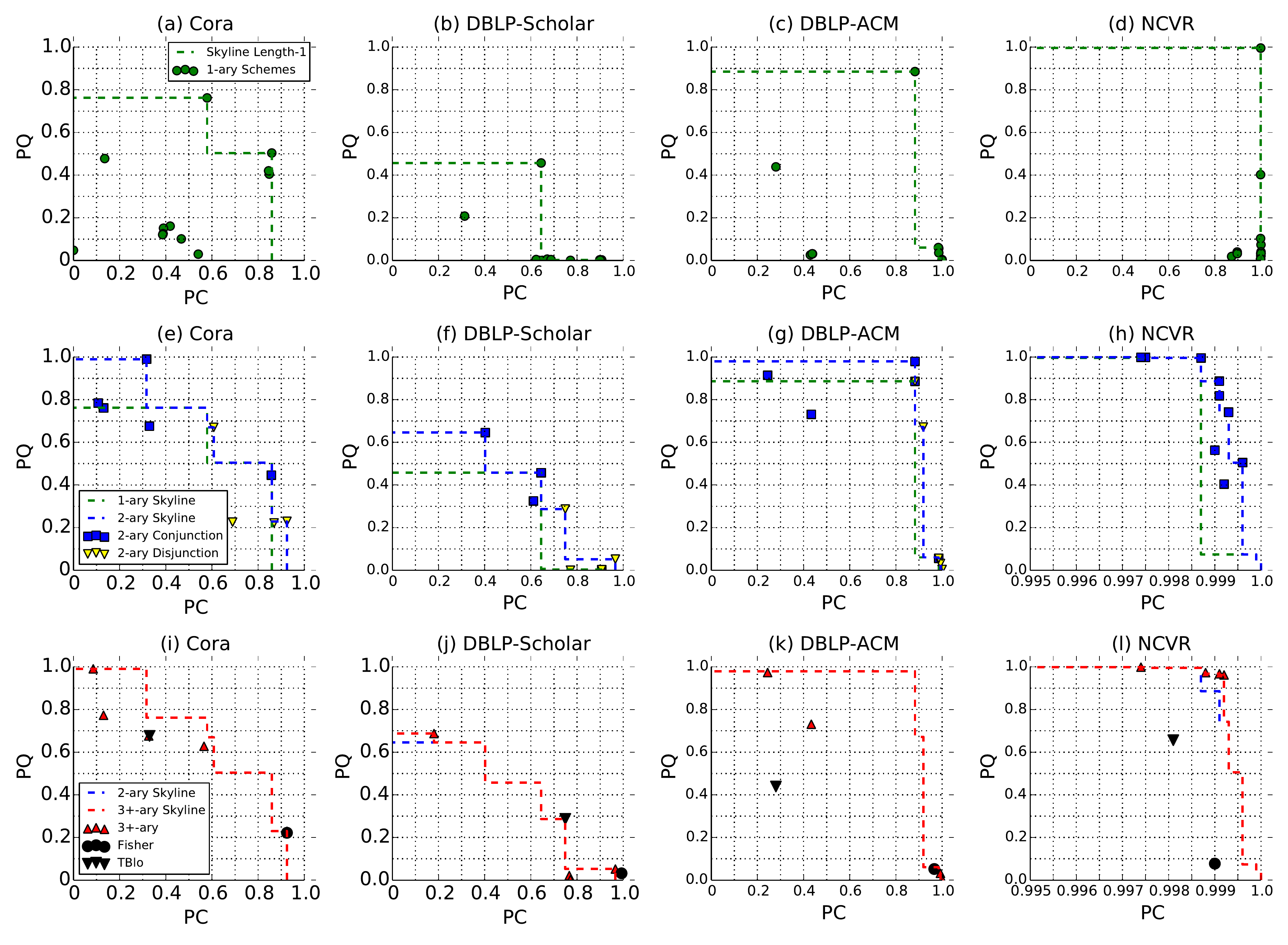}
	\caption{The progressive process for learning scheme skylines by \emph{Pro-Sky} over four datasets}
	\label{fig:Skyline_ex}
\end{figure*}
\subsection{Blocking Quality} 
Now we discuss the blocking quality of our algorithms.
\begin{figure} [ht]
	   \centering
		\includegraphics[width=0.45 \textwidth, 
        height = 0.18\textheight]{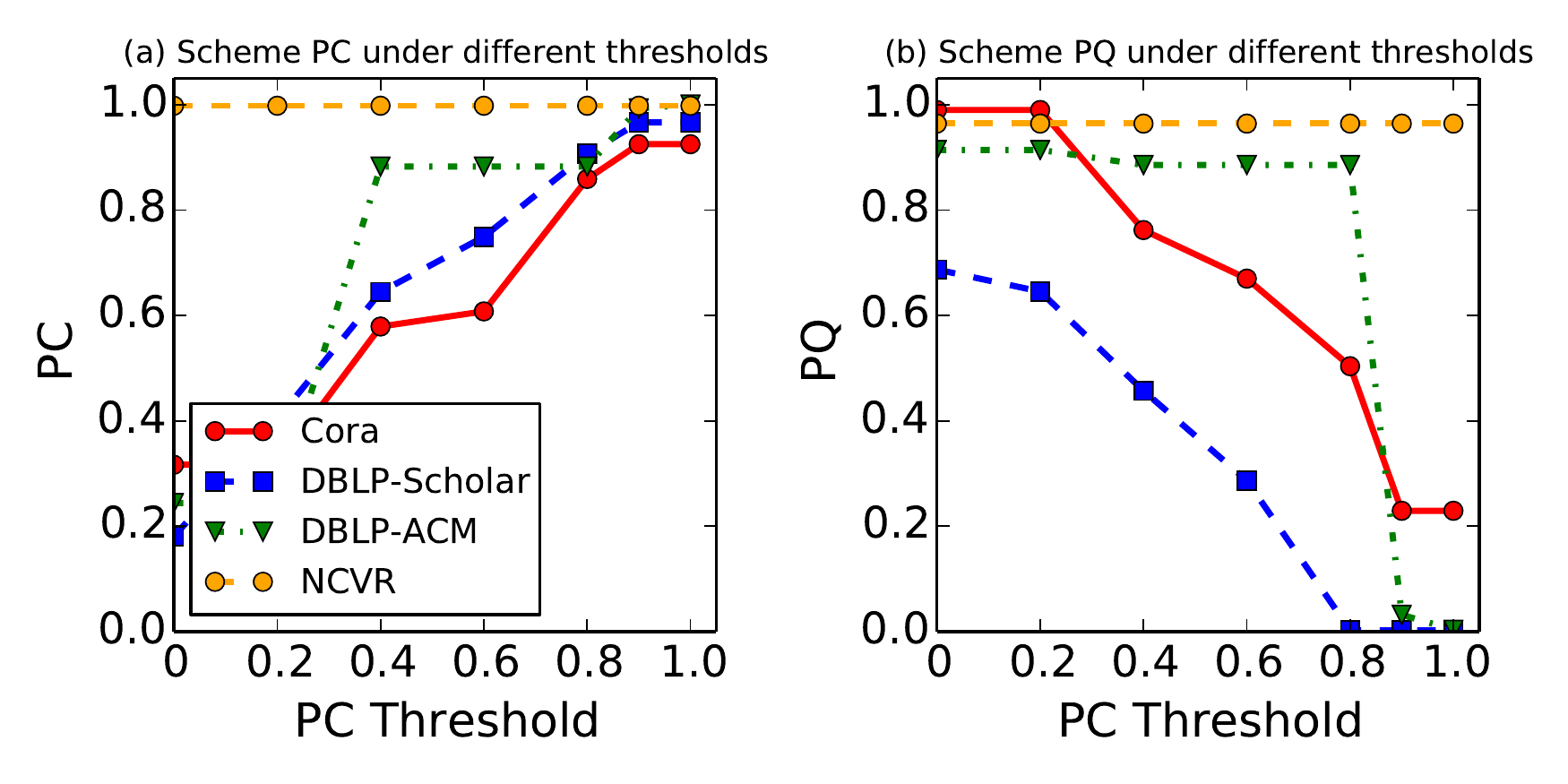}
	\caption{Blocking quality of PC (a) and PQ (b) under different PC thresholds over four datasets} 
	\label{fig:er}
\end{figure}

\subsubsection{Under different aries} The experimental results of the \emph{Pro-Sky} algorithm under different aries are presented in Fig.~\ref{fig:Skyline_ex}. We have also tested on \emph{Naive-Sky} and \emph{Active-Sky} with $\Delta = 0.05$ and $\Delta = 0.10$. However, because the blocking schemes in the scheme skylines learned by these algorithms are the subsets of the scheme skyline generated by \emph{Pro-Sky} in which no $\Delta$ is defined, we thus omit the results for \emph{Naive-Sky} and \emph{Active-Sky}. Fig. \ref{fig:Skyline_ex} presents the progressive process for generating the scheme skylines as well as the blocking schemes from 1-ary to 3+-ary over four datasets. We do not present the further process of 3+-ary, as the scheme skylines have already been generated within 3-ary. In Fig. \ref{fig:Skyline_ex}.(h) and \ref{fig:Skyline_ex}.(l) for NCVR dataset, we present detailed schemes with $PC \in [0.995, 1]$ because most of scheme skylines are located in this range. Such schemes can not be learned by \emph{Naive-Sky} nor \emph{Active-Sky} unless $\Delta$ is set to be very small (e.g. 0.0001). 

\begin{figure*}	[h]
	\begin{center}
		\includegraphics[width= \textwidth]{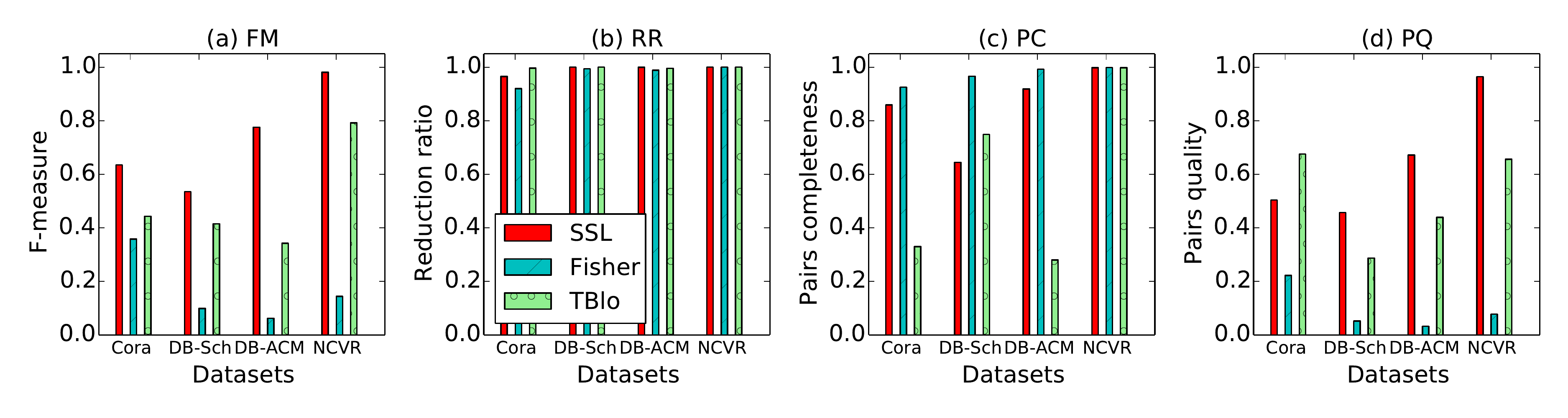}
	\end{center}
\vspace{-10mm}
	\caption{Comparison on blocking quality by different blocking approaches over four datasets using the measures: (a) FM, (b) RR, (c) PC, and (d) PQ}
	\label{fig:performance}
\end{figure*}

\subsubsection{Under different PC thresholds}
To make a detailed comparison with the existing work, we have conducted experiments based on our ASL algorithm under different PC thresholds. We have monitored the blocking schemes learned by the ASL algorithm as well as their PC and PQ values under different PC thresholds $\epsilon \in \{ 0.1, 0.2, 0.4, 0.6, 0.8, 1.0\}$, and the results are presented in Fig. ~\ref{fig:er}. It shows how the PC and PQ vary due to the changing of learned schemes with the increment of the PC threshold with sufficient label budget.


With the increment of threshold, the blocking scheme we learn generates lower PC and higher PQ. However, there are still some points with the same PC and PQ even though the threshold increases. This indicates that under certain thresholds, the algorithm learns the same blocking scheme, which proves, on another side, the efficiency of Active-Sky compared with Naive-Sky. For example, the performance of our approach for NCVR dataset is consistent whatever the PC threshold is, because the blocking scheme learned can generate blocks with both high PC and PQ. On the contrary, for other datasets such as Cora, a lower PC threshold allows the algorithm to seek for blocking schemes that can generate higher PQ, but the PC decreases. We can also notice that for DBLP-ACM and NCVR datasets, blocking schemes with both high PC and PQ are learned with a low threshold (e.g. $\epsilon = 0.4$), but for DBLP-Scholar, no blocking schemes can be learned with high PQ (i.e. higher than 0.6). In the figure, blocking schemes with PC threshold 1.0 are normally hard to achieve, hence we present the maximum PC our scheme can achieve.

\subsubsection{Compared with baselines}

Existing work largely focuses on learning a single blocking scheme, while our approach aims to learn a scheme skyline, which is a set of blocking schemes. Hence we first conduct experiments to present our scheme skyline learning results and show that, in a 2-dimension space of PC and PQ, the scheme points learned by the baselines \emph{TBlo} and \emph{Fisher} are dominated or contained in our scheme skylines over four datasets, as shown in Fig.~\ref{fig:Skyline_ex}(i)-(l).

\begin{table} [h]

	\caption{Comparison on blocking quality where the baseline PC values are selected as thresholds}	
	\centering
	\label{tab:bq}
    
    \begin{tabular}{|c|c|c|c|c|}
    \toprule
    \multirow{2}{*}{} & \multicolumn{2}{|c|}{PC} & \multicolumn{2}{|c|}{PQ} \\
    \cline{2-5}
    & TBlo & ASL & TBlo & ASL \\
    \hline
    Cora & 0.3296 & 0.3167 & 0.6758 & 0.9898 \\
    \hline
    DBLP-Scholar & 0.7492 & 0.7492 & 0.2869 & 0.2869 \\
    \hline
    DBLP-ACM & 0.2801 & 0.8826 & 0.4387 & 0.8854 \\
    \hline
    NCVR & 0.9981 & 0.9979 & 0.6558 & 0.9640 \\    
    \hline
    \end{tabular}
    
    \vspace{5mm}
    
    \begin{tabular}{|c|c|c|c|c|}
    \toprule
    \multirow{2}{*}{} & \multicolumn{2}{|c|}{PC} & \multicolumn{2}{|c|}{PQ} \\
    \cline{2-5}
    & Fisher & ASL & Fisher & ASL \\
    \hline
    Cora & 0.9249 & 0.9249 & 0.2219 & 0.2219 \\
    \hline
    DBLP-Scholar & 0.9928 & 0.9928 & 0.0320 & 0.0320 \\
    \hline
    DBLP-ACM & 0.9661 & 0.9686 & 0.0522 & 0.6714 \\
    \hline
    NCVR & 0.9990 & 0.9990 & 0.0774 & 0.0774 \\    
    \hline
    \end{tabular}
	
\end{table}

To make a fair point-to-point comparison with baselines, we conduct experiments which regard the baseline PC values as PC thresholds for the ASL algorithm, and compare the PC and PQ values which are listed in Table~\ref{tab:bq}. Comparing with \emph{TBlo}, our approach can generate blocks with much higher (i.e. from 50\% to 101\%) PQ while remaining similar PC except in \emph{DBLP-Scholar}, where the results are the same. Comparing with \emph{Fisher}, in the most case, the results are the same, except in \emph{DBLP-Scholar}, the results generated by our approach have a 12 times higher PQ. In general, our approach can generate results as good as or better than the baseline approaches under same thresholds with sufficient label budget.

We also present the schemes with highest f-measure values in the scheme skylines and compare them with baseline schemes in terms of FM, RR, PC and PQ. The FM results are shown in Fig.~\ref{fig:performance}(a) in which our approach outperforms all the baselines over all the datasets. In Fig.~\ref{fig:performance}(b), all the approaches yield high RR values over four datasets. In Fig.~\ref{fig:performance}(c), the PC values of our approach are not the highest over the four datasets, but they are not much lower than the highest one (i.e. within 10\% lower except in DBLP-Scholar). However, out approach can generate higher PQ values than all the other approaches, from 15 percents higher in NCVR (0.9956 vs 0.8655) to 20 times higher in DBLP-ACM (0.6714 vs 0.0320), as shown in Fig.~\ref{fig:performance}(d).

\section{Related Work}
\label{sec_rw}


Scheme based blocking techniques for entity resolution were first mentioned by Fellegi and Sunter 
\cite{fellegi1969theory}.
They used a pair \emph{(attribute, matching-method)} to define blocks. 
For example, the soundex code of both names \emph{Gail} and \emph{Gayle} is ``g400'', thus records that contain either of the names will be placed into the same block. At that time, a blocking scheme was chosen by domain experts without using any learning algorithm. After that, scheme learning approaches generally fall into two categories: (1) supervised blocking scheme learning approaches \cite{michelson2006learning,cao2011leveraging}, and (2) Unsupervised blocking scheme learning approaches \cite{kejriwal2013unsupervised, kejriwal2014two, kejriwal2015dnf}. 

Michelson and Knoblock \cite{michelson2006learning} first proposed a blocking scheme learning algorithm, called \emph{Blocking Scheme Learner}, which is the first supervised algorithm to learn blocking schemes.
This approach adopts the Sequential Covering Algorithm(SCA) \cite{mitchell1997machine} to learn schemes with both high PC and high RR. In the same year, Bilenko et al. \cite{bilenko2006adaptive} proposed two blocking scheme learning algorithms called \emph{ApproxRBSetCover} and \emph{ApproxDNF} to learn disjunctive blocking schemes and DNF (i.e. Disjunctive Normal Form) blocking schemes, respectively. Both algorithms are supervised and need training samples with labels.
Then, Cao et al.  \cite{cao2011leveraging} noticed that obtaining labels for training samples is very expensive, and thus used both labeled and unlabeled samples in their approach. Their algorithm can learn a blocking scheme using conjunctions of blocking predicates which satisfy both minimum true-match coverage and minimum precision criteria. 

Later on, Kejriwal et al. \cite{kejriwal2013unsupervised} proposed an unsupervised algorithm for learning blocking schemes, where no labels from human labelers are needed. Instead, a weak training set was applied, where both positive and negative labels were generated
by calculating the similarity of two records in terms of TF-IDF. 
The predicate with the highest score is selected as part of the result, and if the lower ranking predicate can cover more positively labeled pairs in the training set, they will be selected in a disjunctive form. After traversing all the predicates, a blocking scheme is learned. Although this approach circumvents the need of human labelers, using unlabeled samples or labeled samples only based on string similarity are not reliable \cite{wang2016semantic} and hence blocking quality can not be guaranteed \cite{wang2016clustering}. 
We need to notice that the Meta-blocking approaches proposed by \cite{papadakis2014meta,papadakis2014supervised} are proposed based on existing blocking results, and is not targeting at learning blocking schemes.



However, so far, all related approaches have focused on learning a single blocking scheme under given constraints. No work has been reported to provide an overview of all possible blocking schemes under different constraints for the users so that they can choose their preferred blocking schemes under their own constraints. To fill in this gap, in this paper, we consider to use skyline techniques to present a set of blocking schemes under different constraints.

The concept of \emph{skyline query} has been widely studied in the context of databases. A good number of approaches have been developed in recent years \cite{chomicki2013skyline,lin2007selecting,tao2009distance,sarma2011representative}, which primarily focused on learning representative skylines, such as top-k RSP (representative skyline points) \cite{lin2007selecting}, k-center (i.e. choose $k$ centers and one skyline point for each center) \cite{tao2009distance} and threshold-based preference \cite{sarma2011representative}. A survey by Kalyvas and Tzouramanis \cite{kalyvas2017survey} has reviewed the major approaches in this area.

From an algorithmic perspective, a naive algorithm for skyline queries (e.g., nested-loop algorithm) has the time complexity $O(n^2d)$, where $n$ is the number of records and $d$ is the number of attributes in a given database. Later, several algorithms have been proposed to improve the efficiency of skyline queries based on different properties, which have been previously ignored by the naive algorithm. In the early days, Borzsony et al. \cite{borzsony2001skyline} proposed the BNL (block nested-loop) algorithm based on the transitivity of the dominance relation (e.g. if $a$ dominates $b$ and $b$ dominates $c$, then $a$ dominates $c$) . Then, Chomicki et al. \cite{chomicki2003skyline,chomicki2005skyline} proposed an SFS (sort-filter-skyline) algorithm with the improvements: progressive and optimal comparison times. Sheng and Tao proposed an EM (external memory) model based on the attribute order, which was discussed by Borzsonyi et al. \cite{borzsony2001skyline}. Morse et al. proposed the LS-B (lattice skyline) algorithm based on the low cardinality of some attributes (e.g. the rating of movies are integers within a small range of [1, 5]) \cite{morse2007efficient}. Papadias et al. proposed the BBS (branch-and-bound skyline) algorithm based on the index of all input records by an R-tree \cite{papadias2003optimal}. 

Nevertheless, existing work on skyline queries aimed to efficiently tease out the skyline of queries over a database in which records and attributes are known. In contrast, our study in this paper has shifted the focus to learning the skyline of blocking schemes in terms of a given number of selection criteria but the actual values of those selection criteria are not available in a database. Particularly, in many real-world applications, only a limited number of labels are allowed to be used for assessing blocking schemes. Thus, how to efficiently and effectively learning the skyline of blocking schemes is a difficult task, as previously discussed. To overcome the difficulty of limited label budgets, in this paper, we consider to leverage active learning techniques for finding informative samples and improving the performance of learning.

Active learning has been extensively studied in the past \cite{settles2012active}. Ertekin et al. \cite{ertekin2007learning} showed that active learning may provide almost the same or even better results in solving the class imbalance problem, compared with the random sampling approaches, such as oversampling the minority class and/or undersampling the majority class \cite{chawla2002smote}. In recent years, a number of active learning approaches have been introduced for entity resolution \cite{arasu2010active,bellare2012active,fisher2016active}. For example, Arasu et al. \cite{arasu2010active} proposed an active learning algorithm based on the monotonicity assumption, i.e. the more textually similar a pair of records is, the more likely it is a matched pair. Their algorithm aimed to maximize recall under a manually specific precision constraint.
To reduce the label and computational complexity, Bellare et al. \cite{bellare2012active} proposed an approach to solve the maximal recall with precision constraint problem by converting this problem into a classifier learning problem. 
In their approach, the main algorithm is called \emph{ConvexHull} which aimed to find the best classifier with the maximal recall by finding the classifier with the minimal 01-Loss. Here the 01-Loss refers to the total number of false negatives and false positives. The authors then designed another algorithm called \emph{RejectionSampling} which used a black-box to generate the 01-Loss of each classifier. The black-box invokes the \emph{IWAL} algorithm in \cite{beygelzimer2010agnostic}.

In this paper, for the first time, we study the problem of scheme skyline learning. Previous work on blocking scheme learning aims to learn a single scheme that matches the user requirements. Although skyline technique can help to provide a set of schemes but traditionally, they regard all the attribute values (e.g. PC and PQ) as known, which do not apply to scheme skyline. 
We thus aim to propose novel algorithms to efficiently learn a scheme skyline. 




\section{Conclusions}\label{sec_con}

In this paper, we have proposed the scheme skyline learning approach called skyblocking, which uses skyline query techniques and active learning techniques to learn a set of optimal blocking schemes under different constraints and a limited label budget. 
We have tackled the class imbalance problem by solving the balanced sampling problem which is proved to be more label efficient than random sampling. 
We have also proposed the scheme extension strategy to reduce the searching space and the label cost. Three algorithms are proposed for efficiently learning scheme skylines.
Additionally, our approach overcomes the weaknesses of existing blocking scheme learning approaches in that: (1) Previously, supervised blocking scheme learning approaches require a large number of labels for learning a blocking scheme, which is an expensive task for entity resolution; (2) Existing unsupervised approaches generate training sets based on the similarity of record pairs, instead of true labels, thus the training quality can not be guaranteed.







\bibliographystyle{abbrv}
\bibliography{IS}

\begin{thebibliography}{10}

\bibitem{arasu2010active}
A.~Arasu, M.~G{\"o}tz, and R.~Kaushik.
\newblock On active learning of record matching packages.
\newblock In {\em SIGMOD}, pages 783--794. ACM, 2010.

\bibitem{bellare2012active}
K.~Bellare, S.~Iyengar, A.~G. Parameswaran, and V.~Rastogi.
\newblock Active sampling for entity matching.
\newblock In {\em SIGKDD}, pages 1131--1139. ACM, 2012.

\bibitem{beygelzimer2010agnostic}
A.~Beygelzimer, D.~J. Hsu, J.~Langford, and T.~Zhang.
\newblock Agnostic active learning without constraints.
\newblock In {\em ANIPS}, pages 199--207, 2010.

\bibitem{bilenko2006adaptive}
M.~Bilenko, B.~Kamath, and R.~J. Mooney.
\newblock Adaptive blocking: Learning to scale up record linkage.
\newblock In {\em ICDM}, pages 87--96. IEEE, 2006.

\bibitem{borzsony2001skyline}
S.~Borzsony, D.~Kossmann, and K.~Stocker.
\newblock The skyline operator.
\newblock In {\em ICDE}, pages 421--430. IEEE, 2001.

\bibitem{cao2011leveraging}
Y.~Cao, Z.~Chen, J.~Zhu, P.~Yue, C.-Y. Lin, and Y.~Yu.
\newblock Leveraging unlabeled data to scale blocking for record linkage.
\newblock In {\em IJCAI}, volume~22, page 2211, 2011.

\bibitem{chawla2002smote}
N.~V. Chawla, K.~W. Bowyer, L.~O. Hall, and W.~P. Kegelmeyer.
\newblock Smote: synthetic minority over-sampling technique.
\newblock {\em Journal of artificial intelligence research}, 16:321--357, 2002.

\bibitem{chester2015explanations}
S.~Chester and I.~Assent.
\newblock Explanations for skyline query results.
\newblock In {\em EDBT}, pages 349--360, 2015.

\bibitem{chomicki2013skyline}
J.~Chomicki, P.~Ciaccia, and N.~Meneghetti.
\newblock Skyline queries, front and back.
\newblock {\em SIGMOD Record}, 42(3):6--18, 2013.

\bibitem{chomicki2003skyline}
J.~Chomicki, P.~Godfrey, J.~Gryz, and D.~Liang.
\newblock Skyline with presorting.
\newblock In {\em ICDE}, pages 717--719. IEEE, 2003.

\bibitem{chomicki2005skyline}
J.~Chomicki, P.~Godfrey, J.~Gryz, and D.~Liang.
\newblock Skyline with presorting: Theory and optimizations.
\newblock In {\em IIPWM}, pages 595--604. Springer, 2005.

\bibitem{christen2012data}
P.~Christen.
\newblock {\em Data matching: concepts and techniques for record linkage,
  entity resolution, and duplicate detection}.
\newblock Springer Science \& Business Media, 2012.

\bibitem{christen2012survey}
P.~Christen.
\newblock A survey of indexing techniques for scalable record linkage and
  deduplication.
\newblock {\em TKDE}, 24(9):1537--1555, 2012.

\bibitem{christen2015efficient}
P.~Christen, D.~Vatsalan, and Q.~Wang.
\newblock Efficient entity resolution with adaptive and interactive training
  data selection.
\newblock In {\em ICDM}, pages 727--732. IEEE, 2015.

\bibitem{draisbach2009comparison}
U.~Draisbach and F.~Naumann.
\newblock A comparison and generalization of blocking and windowing algorithms
  for duplicate detection.
\newblock In {\em International Workshop on QDB}, pages 51--56, 2009.

\bibitem{ertekin2007learning}
S.~Ertekin, J.~Huang, L.~Bottou, and L.~Giles.
\newblock Learning on the border: active learning in imbalanced data
  classification.
\newblock In {\em CIKM}, pages 127--136. ACM, 2007.

\bibitem{fellegi1969theory}
I.~P. Fellegi and A.~B. Sunter.
\newblock A theory for record linkage.
\newblock {\em Journal of the American Statistical Association},
  64(328):1183--1210, 1969.

\bibitem{fisher2016active}
J.~Fisher, P.~Christen, and Q.~Wang.
\newblock Active learning based entity resolution using markov logic.
\newblock In {\em PAKDD}, pages 338--349. Springer, 2016.

\bibitem{fisher2015clustering}
J.~Fisher, P.~Christen, Q.~Wang, and E.~Rahm.
\newblock A clustering-based framework to control block sizes for entity
  resolution.
\newblock In {\em SIGKDD}, pages 279--288. ACM, 2015.

\bibitem{holmes2002improving}
D.~Holmes and M.~C. McCabe.
\newblock Improving precision and recall for soundex retrieval.
\newblock In {\em ITCC}, pages 22--26. IEEE, 2002.

\bibitem{kalyvas2017survey}
C.~Kalyvas and T.~Tzouramanis.
\newblock A survey of skyline query processing.
\newblock {\em arXiv preprint arXiv:1704.01788}, 2017.

\bibitem{kejriwal2013unsupervised}
M.~Kejriwal and D.~P. Miranker.
\newblock An unsupervised algorithm for learning blocking schemes.
\newblock In {\em ICDM}, pages 340--349. IEEE, 2013.

\bibitem{kejriwal2014two}
M.~Kejriwal and D.~P. Miranker.
\newblock A two-step blocking scheme learner for scalable link discovery.
\newblock In {\em OM}, pages 49--60, 2014.

\bibitem{kejriwal2015dnf}
M.~Kejriwal and D.~P. Miranker.
\newblock A dnf blocking scheme learner for heterogeneous datasets.
\newblock {\em arXiv preprint arXiv:1501.01694}, 2015.

\bibitem{kisielewicz1988solution}
A.~Kisielewicz.
\newblock A solution of dedekincts problem on the number of isotone boolean
  functions.
\newblock {\em J. reine angew. math}, 386:139--144, 1988.

\bibitem{kopcke2010frameworks}
H.~K{\"o}pcke and E.~Rahm.
\newblock Frameworks for entity matching: A comparison.
\newblock {\em Data \& Knowledge Engineering}, 69(2):197--210, 2010.

\bibitem{kopcke2010evaluation}
H.~K{\"o}pcke, A.~Thor, and E.~Rahm.
\newblock Evaluation of entity resolution approaches on real-world match
  problems.
\newblock {\em VLDB Endowment}, 3(1-2):484--493, 2010.

\bibitem{lin2007selecting}
X.~Lin, Y.~Yuan, Q.~Zhang, and Y.~Zhang.
\newblock Selecting stars: The k most representative skyline operator.
\newblock In {\em ICDM}, pages 86--95. IEEE, 2007.

\bibitem{michelson2006learning}
M.~Michelson and C.~A. Knoblock.
\newblock Learning blocking schemes for record linkage.
\newblock In {\em AAAI}, pages 440--445, 2006.

\bibitem{mitchell1997machine}
T.~M. Mitchell et~al.
\newblock Machine learning. wcb, 1997.

\bibitem{morse2007efficient}
M.~Morse, J.~M. Patel, and H.~V. Jagadish.
\newblock Efficient skyline computation over low-cardinality domains.
\newblock In {\em VLDB}, pages 267--278. VLDB Endowment, 2007.

\bibitem{papadakis2014meta}
G.~Papadakis, G.~Koutrika, T.~Palpanas, and W.~Nejdl.
\newblock Meta-blocking: Taking entity resolutionto the next level.
\newblock {\em TKDE}, 26(8):1946--1960, 2014.

\bibitem{papadakis2014supervised}
G.~Papadakis, G.~Papastefanatos, and G.~Koutrika.
\newblock Supervised meta-blocking.
\newblock {\em VLDB Endowment}, 7(14):1929--1940, 2014.

\bibitem{papadias2003optimal}
D.~Papadias, Y.~Tao, G.~Fu, and B.~Seeger.
\newblock An optimal and progressive algorithm for skyline queries.
\newblock In {\em SIGMOD}, pages 467--478. ACM, 2003.

\bibitem{philips2000double}
L.~Philips.
\newblock The double metaphone search algorithm.
\newblock {\em C/C++ users journal}, 18(6):38--43, 2000.

\bibitem{sarma2011representative}
A.~D. Sarma, A.~Lall, D.~Nanongkai, R.~J. Lipton, and J.~Xu.
\newblock Representative skylines using threshold-based preference
  distributions.
\newblock In {\em ICDE}, pages 387--398. IEEE, 2011.

\bibitem{settles2012active}
B.~Settles.
\newblock Active learning.
\newblock {\em Synthesis Lectures on AIML}, 6(1):1--114, 2012.

\bibitem{shao2018active}
J.~Shao and W.~Qing.
\newblock Active blocking scheme learning for entity resolution.
\newblock In {\em PAKDD}, 2018.

\bibitem{tao2009distance}
Y.~Tao, L.~Ding, X.~Lin, and J.~Pei.
\newblock Distance-based representative skyline.
\newblock In {\em ICDE}, pages 892--903. IEEE, 2009.

\bibitem{wang2016semantic}
Q.~Wang, M.~Cui, and H.~Liang.
\newblock Semantic-aware blocking for entity resolution.
\newblock {\em TKDE}, 28(1):166--180, 2016.

\bibitem{wang2016clustering}
Q.~Wang, J.~Gao, and P.~Christen.
\newblock A clustering-based framework for incrementally repairing entity
  resolution.
\newblock In {\em PAKDD}, pages 283--295. Springer.

\bibitem{wang2015efficient}
Q.~Wang, D.~Vatsalan, and P.~Christen.
\newblock Efficient interactive training selection for large-scale entity
  resolution.
\newblock In {\em PAKDD}, pages 562--573. Springer, 2015.

\end{thebibliography}
%
\section*{Acknowledgment}

  This work was partially funded by the Australian Research Council (ARC) under Discovery Project DP160101934.
\end{document}